\documentclass[conference,a4paper]{IEEEtran}
\IEEEoverridecommandlockouts

\usepackage{cite}
\usepackage{amsmath,amssymb,amsfonts}
\usepackage{amsthm}
\newtheorem{theorem}{Theorem}

\usepackage{graphicx}
\usepackage[small]{caption}

\usepackage{subcaption}
\usepackage{adjustbox}
\usepackage{textcomp}
\usepackage{multirow}
\usepackage{multicol}
\usepackage{booktabs}
\usepackage{bigstrut}
\usepackage{verbatim}
\usepackage{stackengine}
\usepackage{array}
\usepackage{rotating}
\usepackage{xcolor}
\usepackage{dblfloatfix}
\usepackage{booktabs}
\usepackage{multirow}
\usepackage{makecell}
\usepackage{adjustbox}
\usepackage{algorithm}
\usepackage{algpseudocode} 
\usepackage[table,xcdraw]{xcolor}
\definecolor{mygreen}{RGB}{0,130,0}

\usepackage[a4paper, total={184mm,239mm}]{geometry}
\AtBeginDocument{
  \providecommand\BibTeX{{
    \normalfont B\kern-0.5em{\scshape i\kern-0.25em b}\kern-0.8em\TeX}}}

\title{RIVERPlace: 
\underline{R}epairing \underline{I}nterconnect \underline{V}iolations with 
\underline{E}fficient 
\underline{R}etiming and Incremental \underline{Place}ment for AQFP Circuits}

\author{
\IEEEauthorblockN{Robert S. Aviles\IEEEauthorrefmark{1},
Ziyu Liu\IEEEauthorrefmark{1},
Sasan Razmkhah,
Massoud Pedram,
Peter A. Beerel}
    
\IEEEauthorblockA{Department of Electrical and Computer Engineering\\
University of Southern California, Los Angeles, CA 90007 USA\\
Email: \{rsaviles, zliu4130, razmkhah, pedram, pabeerel\}@usc.edu}

\IEEEauthorblockA{\IEEEauthorrefmark{1}These authors contributed equally to this work and are considered co–first authors.}}

\begin{document}

\maketitle

\begin{abstract}
The Adiabatic Quantum-Flux-Parametron (AQFP) offers near-Landauer-limit energy efficiency but faces significant scalability challenges due to strict path balancing and limited drive strength. To address this, we propose RIVERPlace, a framework that integrates long-wire pipelining, retiming, and incremental placement to resolve interconnect violations with minimal disruption. RIVERPlace first applies placement-aware retiming to repair violations without increasing logical depth. When depth increases are necessary, we introduce Buffer Cut Insertion (BCI), which formulates violation resolution as a constrained global edge-selection problem reducible to a maximum topological cut, thereby enabling an exact polynomial-time solution. By selectively pipelining edges across multiple rows, BCI avoids excessive buffer insertion while resolving interconnect violations. Experimental results demonstrate that RIVERPlace consistently outperforms prior AQFP placement approaches, reducing placement overhead by more than an order of magnitude in inserted buffers, 3$\times$ in placement-induced depth, and over 2$\times$ in circuit area, while also reducing runtime by more than an order of magnitude and latency by 38\%. These improvements enable the first post-routing, timing-closed implementations of the complete open-source AQFP benchmark suite, including larger circuits like \texttt{alu32}.

\end{abstract}

\begin{IEEEkeywords}
Physical Design, Retiming, Design Automation, Digital Circuits
\end{IEEEkeywords}

\section{Introduction}

The Adiabatic Quantum Flux Parametron (AQFP) is an emerging superconducting logic family that offers near-Landauer-limit energy efficiency, achieving up to 100× improvement in energy–delay product (EDP) over CMOS even with cryogenic cooling~\cite{cooling_overhead}. This makes AQFP attractive for applications such as quantum control~\cite{takeuchi2024microwave}, in-fridge data preprocessing~\cite{IcePack}, neural networks~\cite{BNN_AQFP2023}, cryptographic accelerators~\cite{SCE-NTT}, stochastic computing~\cite{DL_AQFP2019}, and energy-efficient data centers~\cite{MANA}.

Despite its advantages, AQFP introduces unique physical design challenges. In particular, AQFP circuits exhibit limited drive strength, large cell footprints, and gate-level clocking requirements. The combination of weak drive strength and strict row-wise placement imposed by the AC clock distribution results in numerous interconnect violations.

Previous work addresses these violations through buffer insertion. However, because every AQFP gate, including buffers, is clocked, \textit{path balancing} requires all paths to a given node to traverse the same number of logic stages. Consequently, inserting a buffer on a violating interconnect requires compensating buffers on all parallel paths, leading prior placement methods to insert an entire buffer row between adjacent placement rows~\cite{TAAS,DLPlace,GORDIAN,superflow}. As illustrated in Fig.~\ref{fig:RIVER_BRI}, this Buffer Row Insertion (BRI) approach introduces substantial area and circuit-depth overhead.

\begin{figure}[t] 
\begin{subfigure}[t]{0.25\textwidth}
\centering
\includegraphics[width=\linewidth]{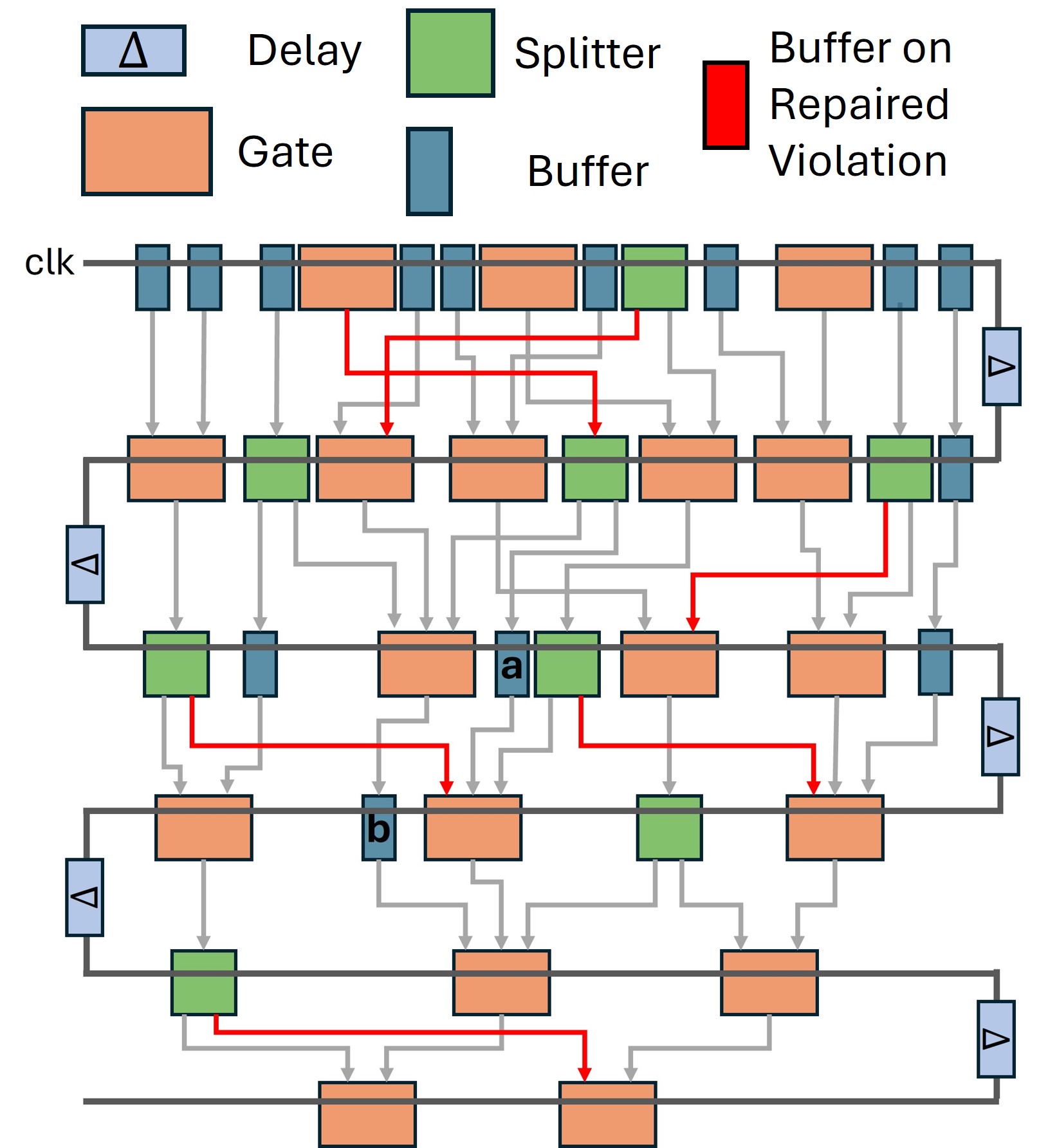}
\caption{AQFP circuit with interconnect\\violations in RED}\label{fig:RIVER_base}
\end{subfigure}%
\begin{subfigure}[t]{0.25\textwidth}
\centering
\includegraphics[width=\linewidth]{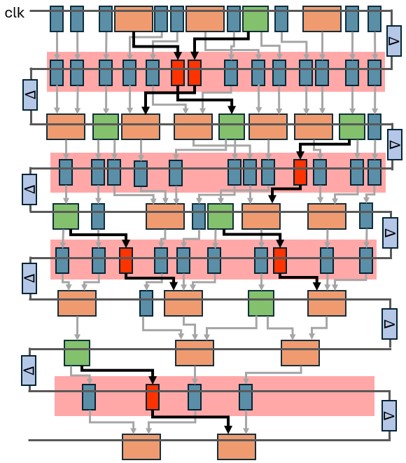}
\caption{Solution after traditional Buffer Row Insertion (red rows), increasing logical depth, area, and latency}\label{fig:RIVER_BRI}
\end{subfigure}
\caption{Illustrative impact of AQFP Placement Overhead}
\vspace{-2em}
\end{figure}

\begin{figure*}[t] 
\begin{subfigure}[t]{0.32\textwidth}
\centering
\includegraphics[width=\linewidth]{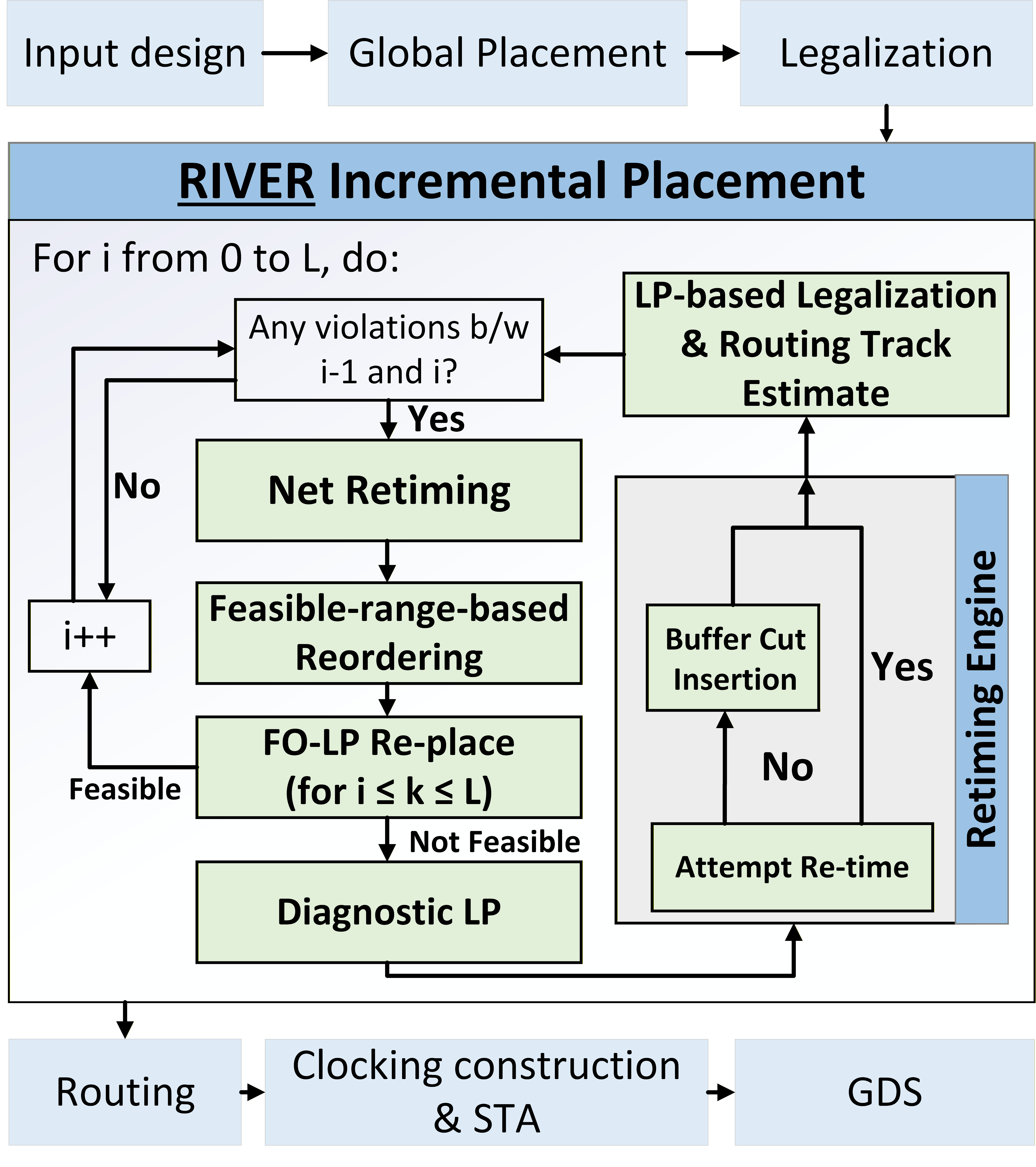}
\caption{RIVERPlace Design Flow}\label{fig:Flow}
\end{subfigure}\hfill
\begin{subfigure}[t]{0.32\textwidth}
\centering
\includegraphics[width=\linewidth]{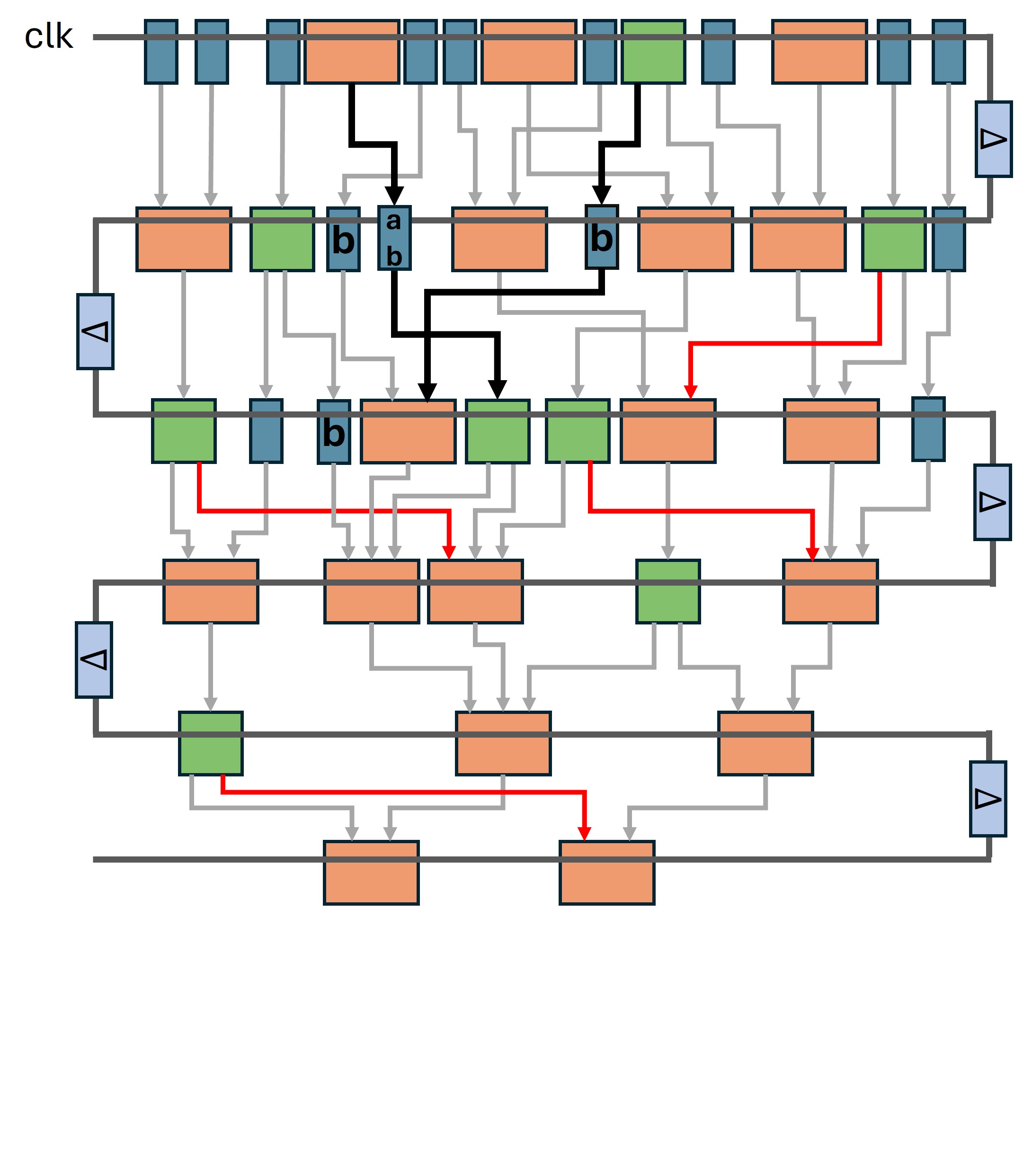}
\caption{Retiming solution to resolve row 1 violations using buffers a \& b from Fig.~\ref{fig:RIVER_base} without increasing circuit depth.  
}\label{fig:TFO}
\end{subfigure}\hfill
\begin{subfigure}[t]{0.32\textwidth}
\centering
\includegraphics[width=\linewidth]{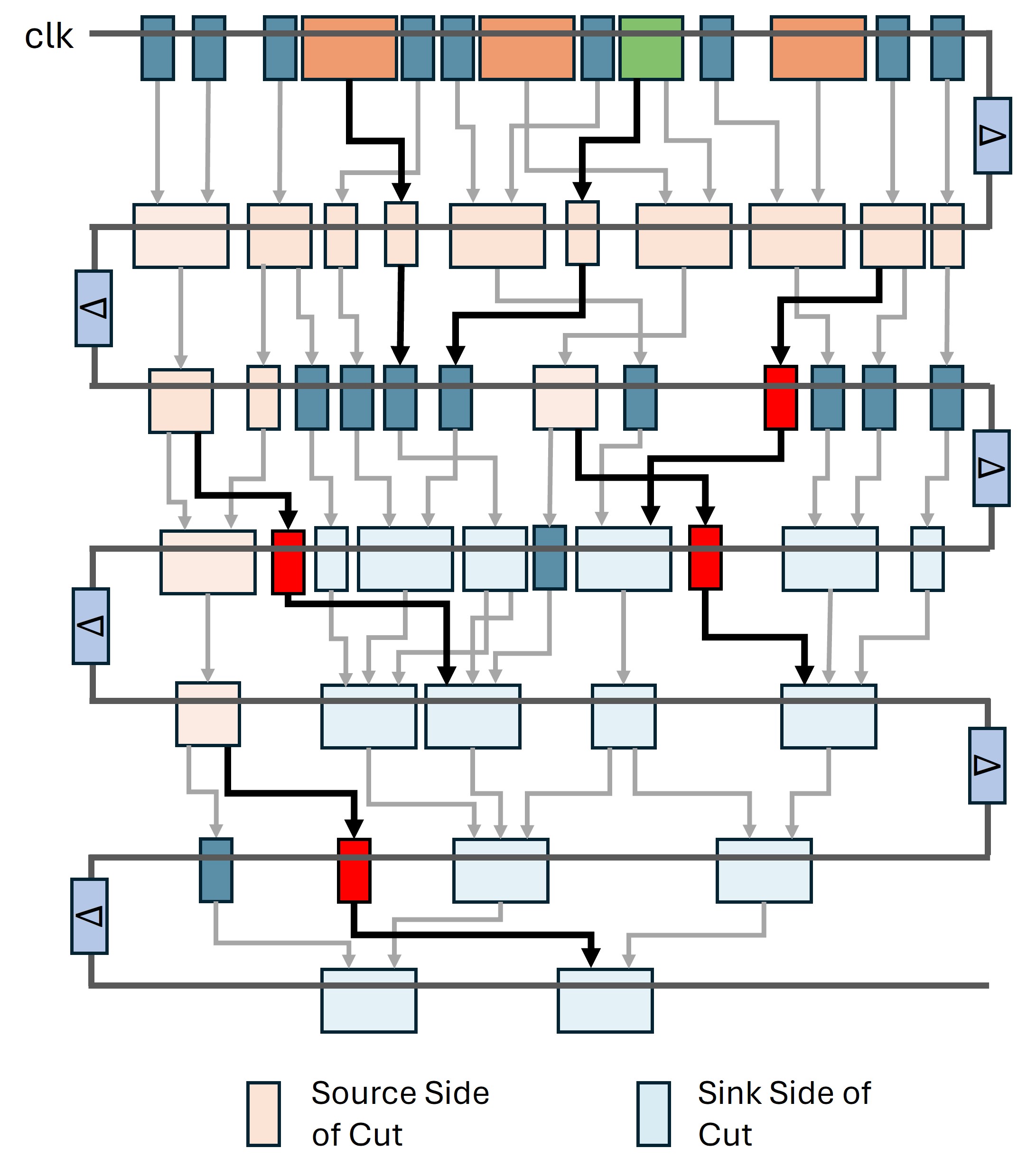}
\caption{Buffer CUT insertion solution of violations in Fig.~\ref{fig:TFO} that increases circuit depth but has reduced overhead compared to Fig.~\ref{fig:RIVER_BRI} }\label{fig:BCI}
\end{subfigure}\hfill
\caption{RIVERPlace Approach to Reducing AQFP Placement Overhead}
\vspace{-1em}
\end{figure*}

Although AQFP interconnect constraints are primarily driven by feasibility rather than performance, they share similarities with the challenges of long-wire pipelining in CMOS design. Thus, motivated by previous CMOS techniques that use retiming to manage interconnect delays~\cite{RetimingPipelining,PhysicalRetiming}, we propose integrating long-wire pipelining with retiming and placement. 

In AQFP, inserting a buffer changes the logic levels of all downstream fanouts, which requires their placement to shift across rows. As a result, inserting buffers can lead to cascading structural changes, in which previously resolved violations reappear, and new violations are introduced.

To address this, RIVERPlace adopts an iterative, placement-aware approach that resolves interconnect violations row-by-row while selectively pipelining edges through global optimization, as illustrated in Fig.~\ref{fig:Flow}. Our method first attempts to resolve violations within the current row using placement-aware retiming of existing non-critical buffers, avoiding increases in circuit depth when possible (Fig.~\ref{fig:TFO}).

When retiming alone is insufficient, we apply a novel Buffer Cut Insertion (BCI) that frames the pipelining of violating interconnects as a global edge selection problem. Specifically, BCI identifies a topological cut of the circuit that maximizes the pipelining of violating edges, subject to path-balancing constraints. This enables the simultaneous resolution of violations across 
multiple rows, significantly reducing pipelining overhead compared to buffer row insertion (Fig.~\ref{fig:BCI}).

More concretely, RIVERPlace is the first AQFP framework to tightly integrate pipelining, retiming, and incremental placement. Our contributions are as follows:
\begin{itemize}
    \item A placement-aware retiming strategy that selectively retimes existing buffers to repair violations without increasing circuit depth, while avoiding the introduction of new downstream violations.
    \item A buffer cut insertion (BCI) technique that formulates violation resolution under path balancing constraints as a constrained global edge selection problem, reducible to a maximum topological cut and enabling an exact polynomial-time solution.
    \item An LP-based incremental detailed placement framework that iteratively resolves interconnect violations in coordination with pipelining and retiming.
    \item Experimental results demonstrating consistent improvements over prior AQFP placement approaches, including more than an order-of-magnitude reduction in JJ insertion overhead and runtime, a 3$\times$ reduction in placement-induced circuit depth, and a 38\% reduction in latency. These improvements enable the first reported post-routing, timing-closed implementations of the complete open-source AQFP benchmark suite.
\end{itemize}

\section{Background}

We begin by reviewing AQFP logic, which is the focus of this work, and the maximum cut and minimum cut problems that are 
critically related to our approach.

\subsection{AQFP Logic}

Adiabatic Quantum Flux Parametron (AQFP) is a superconducting, majority-based logic family~\cite{scl} that encodes binary information using the direction of current flow. Logic evaluation is driven by an AC excitation current that simultaneously provides power and clocking, enabling adiabatic operation with zero static power dissipation~\cite{zero_static}. AQFP circuits operate at GHz frequencies (typically around 5\,GHz) while dissipating approximately $10^{-20}$ J per logic gate~\cite{AQFP_Dynamic_Energy}.

The fundamental AQFP buffer consists of two superconducting loops as shown in Fig.~\ref{fig:AQFPBuffer}. During the rising edge of the AC excitation current, the polarity of the input current determines which loop stores persistent magnetic flux, producing the corresponding output current polarity while the excitation remains active~\cite{AQFP_Timing}. Because each logic gate has relatively low drive strength, signals with multiple fanouts must be driven through explicit clocked splitter cells, resulting in splitter trees that significantly increase logical depth and Josephson junction (JJ) count~\cite{heuristicASP,BS_TCAD_Lee}.

\begin{figure}[t]
\includegraphics[width=0.75\columnwidth]{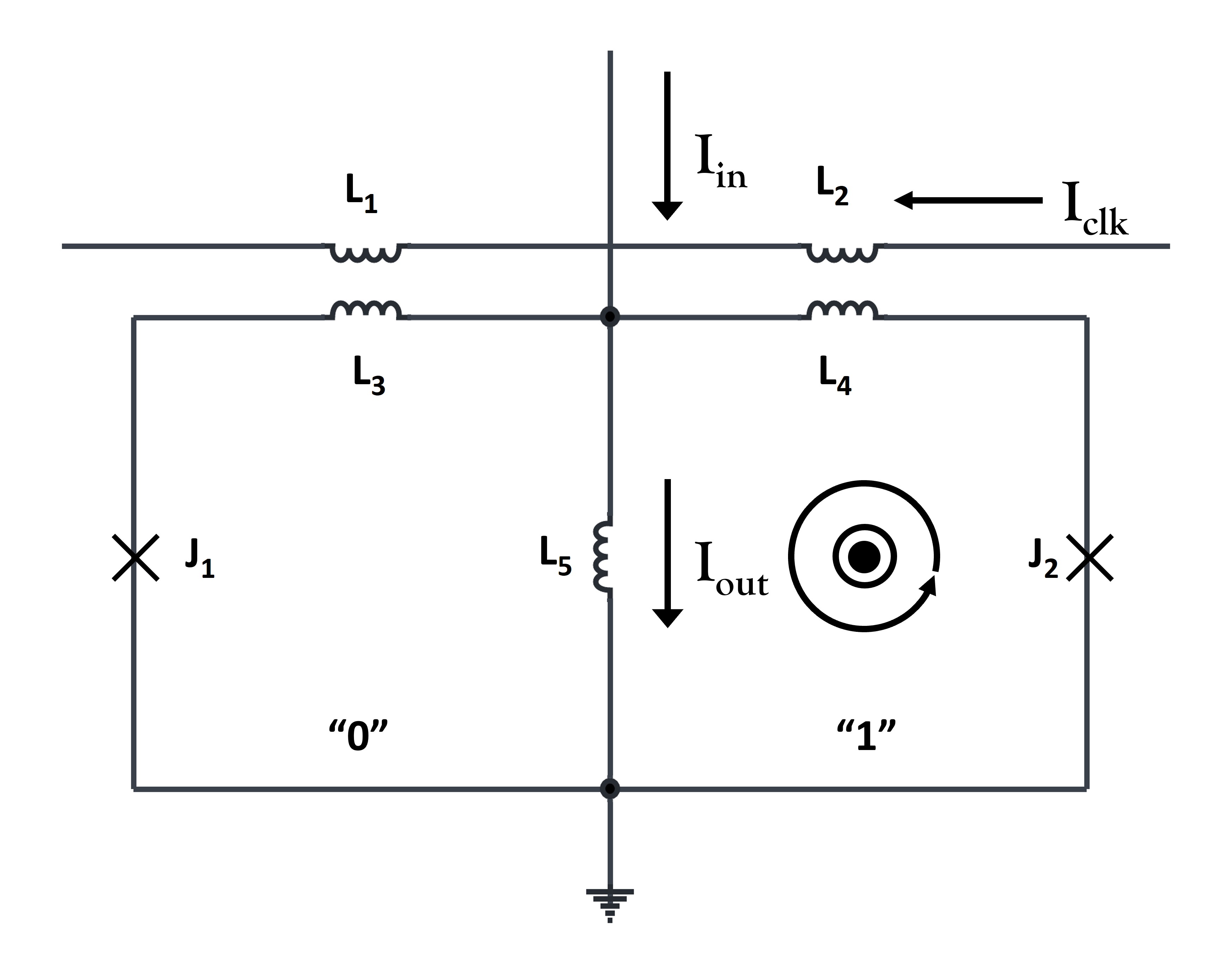}
\centering
\caption{Circuit construction of an AQFP Buffer. Current directions shown for when flux is in the right loop, corresponding to a logical `1'.}\label{fig:AQFPBuffer}
\end{figure}

\begin{figure}[htbp] 
    \centering
    \includegraphics[width=0.7\columnwidth]{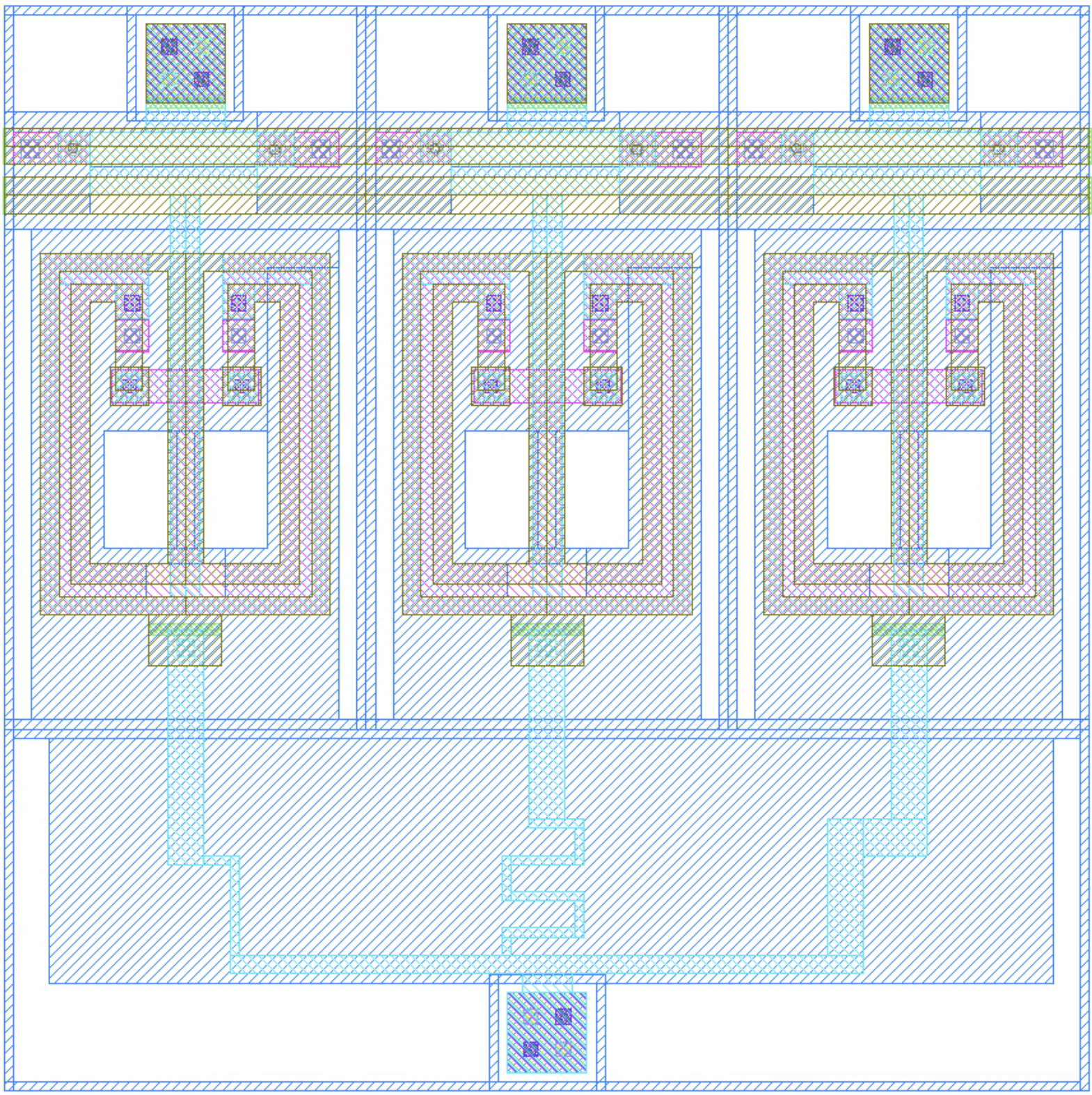}
    \caption{A sample AQFP gate layout, showing the outputs of 3 buffer cells merged to implement a majority-3 function.}
    \label{fig:Maj_gds}
\end{figure}

AQFP logic circuits are constructed from Majority-3 (MAJ3) gates together with a signal inversion gate, forming a functionally complete logic basis~\cite{maj3}. Figure~\ref{fig:Maj_gds} illustrates a representative MAJ3 layout, where the output current polarity is determined by the majority polarity of the three input currents driven by buffers. Practical AQFP cell libraries, therefore, consist of four fundamental cell types: positive polarity buffers, inverting buffers, constant value buffers, and branch cells that implement both majority and splitter functionality~\cite{scl}.

Automated AQFP design flows synthesize circuits using classical synthesis optimizations, before performing technology legalization, which inserts clocked splitter cells for multi-fanout signals and clocked buffers to satisfy path-balancing constraints. Consequently, technology legalization often dominates the final cell count, introducing substantial overhead~\cite{heuristicASP,BS_TCAD_Lee}. Although alternative clocking schemes relax path-balancing constraints and reduce some of this overhead~\cite{Nphaseclk,AvilesNPhase,phaseMatch,avilesPS_PA}, they generally increase interconnect distances and complicate physical design.

\subsection{AQFP Placement}

Following synthesis and technology legalization, AQFP circuits are placed using a row-based organization, in which all gates at the same logic level are assigned to a common placement row and share a clock signal. The clock is routed in a serpentine fashion across rows, as illustrated in Fig.~\ref{fig:RIVER_base}. Traditional AQFP designs employ four-phase AC clocking with adjacent rows offset by 90 degrees, while more recent approaches adopt delay-line clocking to reduce latency~\cite{DelayLineAQFP}.

GORDIAN~\cite{GORDIAN} introduced the first automated placement and routing framework for AQFP circuits, followed by timing-aware placement~\cite{TAAS} and the delay-line placement framework DLPlace~\cite{DLPlace}. Our work adopts delay-line clocking due to its improved latency characteristics.

Across these approaches, interconnect violations are resolved using \emph{Buffer Row Insertion} (BRI), in which an entire row of high-drive buffers is inserted to pipeline violating connections while preserving path balancing. Since AQFP buffers can drive substantially longer interconnects than standard logic gates, BRI simultaneously reduces wire lengths and provides increased drive strength. However, inserting complete buffer rows substantially increases the circuit depth, JJ count, and physical area, motivating alternative pipelining strategies.

\subsection{Max-Cut and Min-Cut problems}

\begin{figure}[t] 
\begin{subfigure}[t]{0.25\textwidth}
\centering
\includegraphics[width=\linewidth]{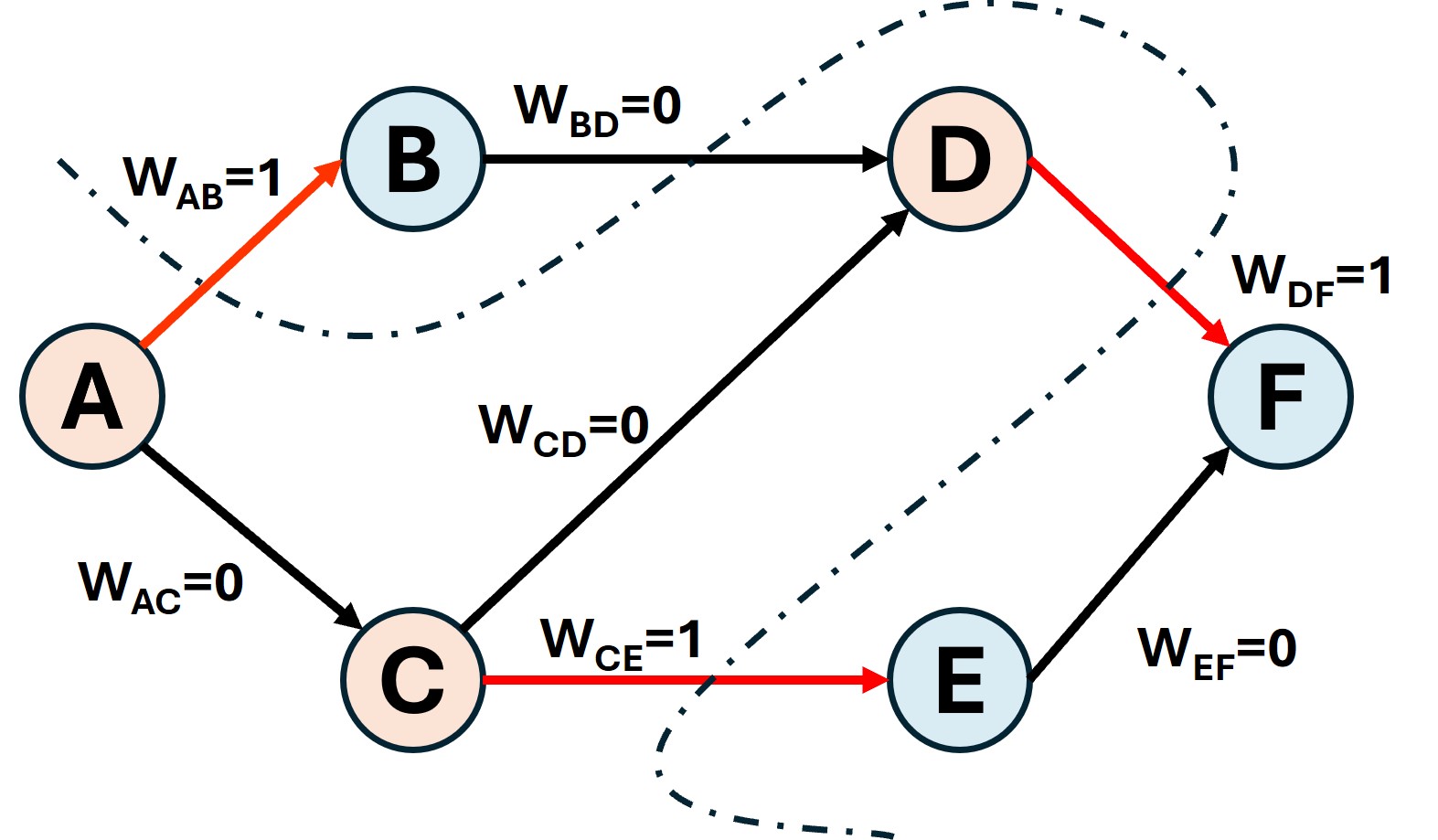}
\caption{Maximum Directed Cut}\label{fig:MaxDiCut}
\end{subfigure}%
\begin{subfigure}[t]{0.25\textwidth}
\centering
\includegraphics[width=\linewidth]{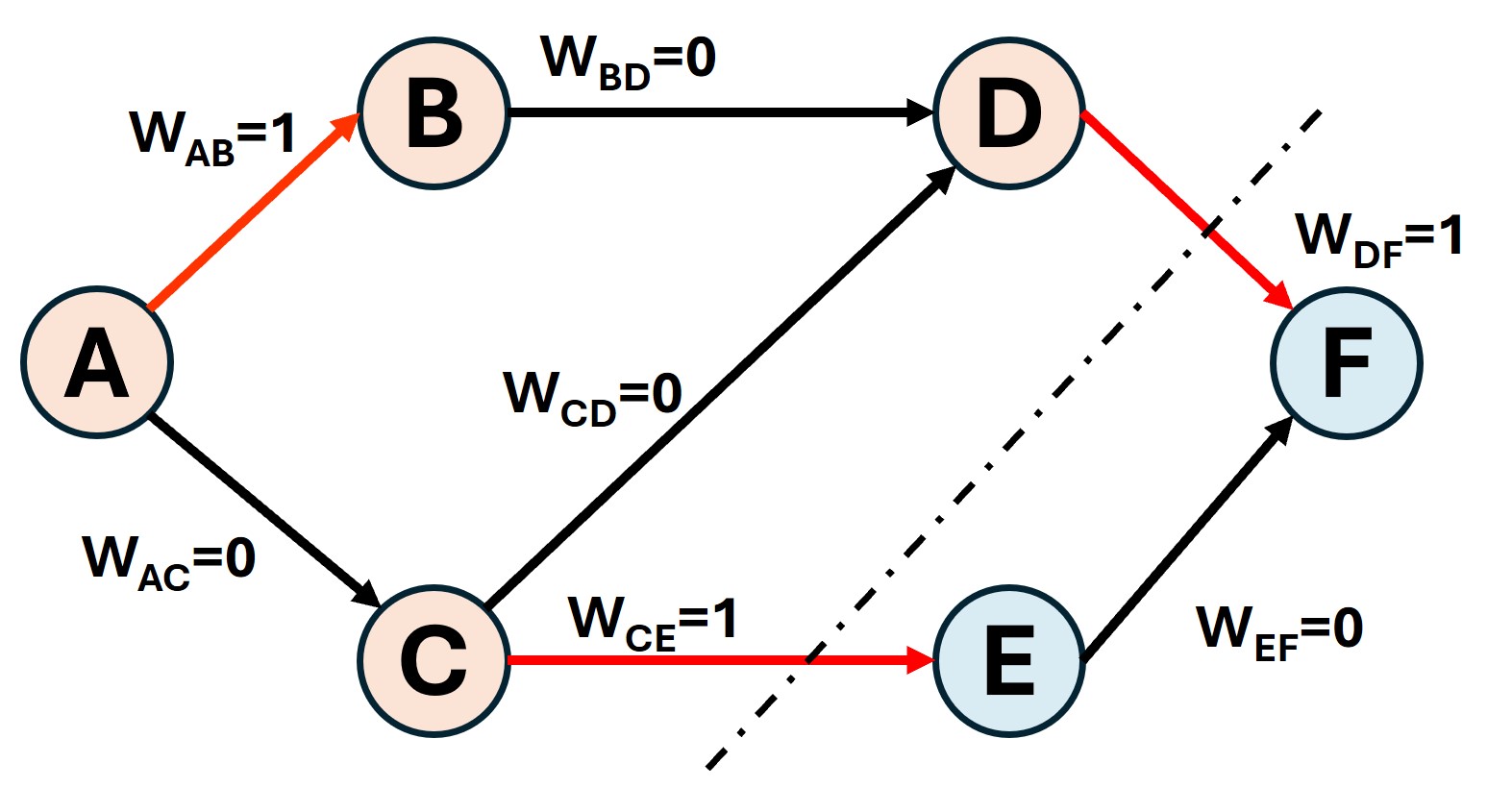}
\caption{Maximum Topological Cut}\label{fig:MaxTopCut}
\end{subfigure}
\caption{Maximum Cut Variants on a Directed Acyclic Graph}
\vspace{-1em}
\end{figure}

The maximum cut problem seeks to partition the vertex set of a graph $G=(V,E)$ into two disjoint sets $(S, V \setminus S)$ such that the total weight of edges crossing the partition is maximized. In directed graphs, this corresponds to selecting a subset $S \subseteq V$ that maximizes the weight of edges directed from $S$ to its complement:

\begin{equation}
\max_{S \subseteq V}
\sum_{(u,v)\in E,\; u \in S,\; v \notin S} w_{uv},
\label{eq:maxcut}
\end{equation}
as illustrated in Fig.~\ref{fig:MaxDiCut}. This problem is NP-complete and is one of Karp’s 21 classical NP-complete problems~\cite{karp1972reducibility}.

However, when additional \emph{precedence constraints} are imposed, the problem becomes tractable. Precedence constraints require that if a node $v \in S$, then all of its predecessors must also belong to $S$. $S$ is then considered \emph{predecessor-closed} and in directed acyclic graphs (DAGs), this yields a \emph{maximum topological cut}. As shown in Fig.~\ref{fig:MaxTopCut}, this constraint ensures that all edges crossing the cut are directed from $S$ to $V \setminus S$. Under these constraints, the problem can be solved exactly in polynomial time via a reduction to the minimum $s$-$t$ cut problem~\cite{MaxTopCut}.

The minimum $s$-$t$ cut problem is defined on a directed graph $G=(V,E)$ with nonnegative edge capacities $c_{uv}$. A flow $f:E \to \mathbb{R}_{\ge 0}$ assigns a value to each edge such that $0 \le f_{uv} \le c_{uv}$ and flow conservation holds at all nodes except the source $s$ and sink $t$:

\begin{equation}
\sum_{u:(u,v)\in E} f_{uv}
=
\sum_{w:(v,w)\in E} f_{vw},
\quad
\forall v \in V \setminus \{s,t\}.
\label{eq:flow_conservation}
\end{equation}
The maximum flow problem seeks a feasible flow that maximizes the total flow from $s$ to $t$. By the max-flow min-cut theorem~\cite{ford1956max}, the value of the maximum flow equals the capacity of the minimum $s$-$t$ cut.

Given a flow $f$, the corresponding residual graph $G_f$ contains forward edges with residual capacity $c_{uv} - f_{uv}$ and reverse edges with capacity $f_{uv}$. The set of vertices reachable from $s$ in $G_f$ defines the source-side partition $S$, and the edges from $S$ to $V \setminus S$ constitute a minimum $s$-$t$ cut.

In~\cite{MaxTopCut}, the maximum topological cut problem is reduced to a minimum $s$-$t$ cut by constructing an augmented graph with precisely derived baseline capacities by subtracting the original edge weights from an applied feasible flow:

\begin{equation}
c^+_{uv} = f_{uv} - w_{uv}.
\label{eq:residual_capacity}
\end{equation}
In this construction, the corresponding $s$-$t$ cut in the augmented graph with minimum capacity is set by the maximum total weight of edges crossing the cut in the original graph. Thus, the maximum topological cut equals the minimum $s$-$t$ cut of the transformed network.

The topological (precedence) constraint is enforced by ensuring all edges carry positive flow, which introduces reverse residual edges and guarantees that the reachable set $S$ is predecessor-closed.

Interested readers can refer to ~\cite{MaxTopCut} for more details on this formulation. Solving this maximum topological cut forms the foundation of our Buffer Cut Insertion (BCI) approach, which leverages topological constraints to maintain path balancing while efficiently selecting global pipelining.

\section{Incremental Placement}

Our incremental placement is applied after an initial global placement, obtained using a method similar to GORDIAN-based placement~\cite{GORDIAN}, as well as after any structural modifications introduced by the retiming engine (described in Section \ref{sec:retimingeng}). 

Let $L$ denote the total number of rows in the current placement. We consider a row-based placement of a circuit represented by the directed graph $G=(V,E)$, where each gate $i \in V$ is assigned to a row $r_i \in \{1,\ldots,L\}$ and $E$ denotes the set of interconnects between gates. Each gate $i$ is associated with an interconnect budget $\beta_i$, determined by its drive strength, which specifies the maximum physical length over which $i$ can reliably drive a sink gate~\cite{IC_limits_AQFP}. The interconnect budget of each cell type is calculated through SPICE-level simulation to ensure correct adiabatic switching of the sink gate under the target cryogenic operating conditions.

Let $V_r$ denote the set of gates in row $r$, $E_r \subseteq E$ the set of connections incident to row $r$, $W_r$ the row width, and $w_i$ representing gate $i$'s width. All gates in the same row share a common vertical coordinate $y_r$, and each gate is represented by its horizontal coordinate $x_i$ at the gate's center. 
The placement output is constrained to satisfy boundary and non-overlap constraints under a given order:
\begin{align}
\frac{w_i}{2} \le x_i \le W_r - \frac{w_i}{2}, \quad &\forall i \in V_r,
\label{eq:boundary} \\
x_j - x_i \ge \frac{w_i + w_j}{2}, \quad &\forall (i,j) \in V_r.
\label{eq:nonoverlap}
\end{align}

With $y$ locations fixed by row assignments, the placement objective is to assign each gate's $x$-location to satisfy interconnect length constraints; and when this is infeasible, resolve such violations in topological order before coordinating with the retiming engine to resolve any remaining violations. More precisely, for each connection $(u,v) \in E$, a violation occurs when the interconnect length exceeds the source gate's prescribed Manhattan budget $\beta_{u}$:

\begin{equation}
X_{uv} + Y_{uv} \le \beta_{u}.
\end{equation}
where $X_{uv}$ and $Y_{uv}$ denote its horizontal and vertical components, respectively. The vertical component $Y_{uv}$ is determined by the current row assignment and the channel width between each row, which may change across iterations (see Section~\ref{subsec:lp_legalization}). Therefore, the constraint on the horizontal component can be written as

\begin{equation}
X_{uv} \le \beta_{u} - Y_{uv},
\end{equation}
where the upper bound is evaluated using the current gate locations at each iteration.

Because jointly optimizing cell ordering and exact coordinates leads to a highly non-convex search space, we decouple the problem: we first apply local retiming and heuristic feasible-range reordering to establish a high-quality gate ordering with a reduced number of violations, and then fix this order to formulate the coordinate assignment as a convex Linear Program (LP), enabling optimal placement for the given sequence while systematically identifying irreconcilable violations via a secondary diagnostic LP, which are then passed to the retiming engine.

\subsection{Net Retiming}
As an initial step of incremental re-placement, a greedy local net retiming procedure is attempted to mitigate interconnect violations. This localized retiming is separate from the global retiming operations explored during the retiming engine, as it only considers retiming an immediate fanout of a violating connection. Violating connections are prioritized by the magnitude of violation, and retiming is attempted for each candidate in descending order.

For each candidate connection, a local retiming is considered for a connection of the form $u \rightarrow b \rightarrow v$, where $u$ denotes a single-fanout driving gate, $b$ denotes a buffer, and $(b,v)$ lies within the allowable driving range of $u$, such that direct driving of $v$ by $u$ remains feasible after retiming $b$ across $u$.

\begin{equation}
X_{bv} + Y_{bv} \le \beta_{u},
\end{equation}
For each eligible candidate, a tentative swap between $u$ and the buffer is performed, followed by a legalization step to ensure that placement is feasible. This retiming is accepted only if the violation is resolved and no additional violations are introduced; otherwise, it is rejected.

\begin{figure}[t]
\centering
\includegraphics[width=\columnwidth]{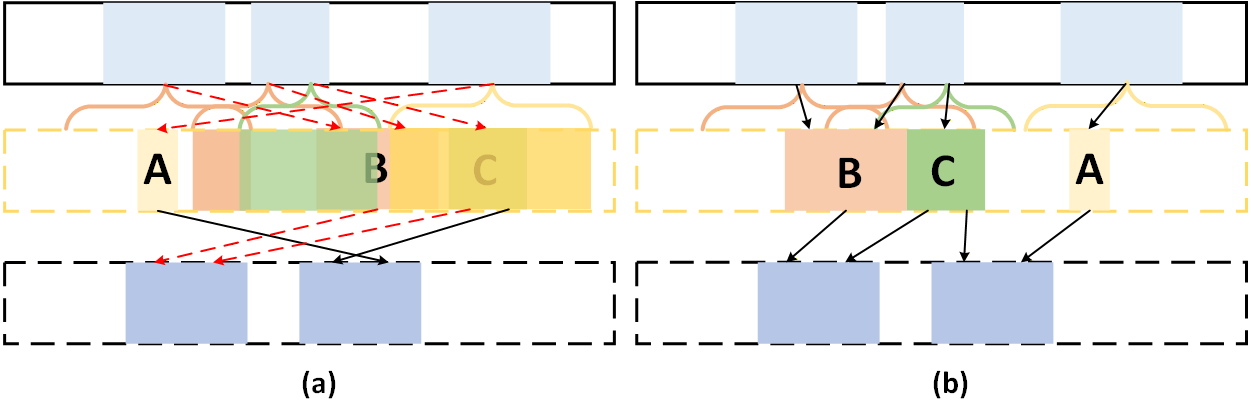}
\caption{Feasible-range-based reordering, with the fixed preceding row on top, the active row in the middle, and the unplaced following row at the bottom. 
(a) The feasible range of $g_A$ (yellow), $g_B$ (orange),
and $g_C$ (green) is the intersection of the brackets of the same color, and the red dashed lines mark the connections that violate $|x_i-x_j|>X_{ij}$. 
(b) After reordering, each gate lies inside its own feasible range, and no violating connection remains.}
\label{fig:feasible_range_reorder}
\end{figure}

\subsection{Feasible-range-based Reordering}
\label{sec:reordering}

Our linear programming-based optimization algorithm is constrained by the fixed ordering of gates within each row. We thus optimize this order as a pre-processing step, as summarized in Algorithm~\ref{alg:reordering}.

In particular, we first classify rows according to retiming-induced changes. A row $r$ is marked as \emph{active} if it contains any modified, added, or removed instance; otherwise, it is considered \emph{inactive}. Since the retiming engine makes predictive pipeline decisions based on the current placement order, gates in active rows may adjust their relative ordering, whereas gates in inactive rows retain their relative order. This ensures that changes induced by the retiming engine and LP converge to a stable placement solution. 

For each gate $i$ in an active row, a feasible range $[\ell_i, u_i]$ is derived from the distance constraints imposed by its fanins, as shown in Fig.~\ref{fig:feasible_range_reorder}(a). 
Each bracket corresponds to one such constraint, brackets of the same color belong to the same gate, and their intersection, drawn as the shaded region of that color, gives the feasible range of that gate. 
Only fanin gates affect the feasible range, because the rows are processed in topological order, so the positions in the preceding row are already fixed and act as hard constraints, whereas the positions in the following row are still free and therefore constrain nothing at this point.
A projected position $\tilde{x}_i$ is then computed based on the feasible range of gate $i$ and the positions of its fanin and fanout neighbors $\mathcal{N}_i$.
The projected positions $\tilde{x}_i$ are used to induce an ordering of the gates and to provide reference positions for subsequent LP-based refinement. The gates are sorted according to $\tilde{x}_i$, with ties broken by the original ordering, and then a pass of neighborhood swap refinement is applied, which produces the order in Fig.~\ref{fig:feasible_range_reorder}(b).

\begin{algorithm}
\caption{Feasible-range-based Reordering}
\label{alg:reordering}
\begin{algorithmic}[1]
\Require Initial gate positions $x_i$, row assignment $V_r$, netlist $\mathcal{N}_i$
\Ensure Updated gate ordering in each row $V_r$, and reference positions $\tilde{x}_i$
\For{each active row $r$}
    \For{each gate $i \in V_r$}
        \State $[\ell_i, u_i] \leftarrow \left[\frac{w_i}{2},\ 
               W_r - \frac{w_i}{2}\right]$
        \For{each driver $d$ of gate $i$}
            \State $[\ell_i, u_i] \leftarrow [\ell_i, u_i] \cap 
                   [x_d - X_{di},\ x_d + X_{di}]$
        \EndFor
        \State $sx \leftarrow 0$, $tw \leftarrow 0$, 
        $\mathit{violation} \leftarrow \text{False}$
        \For{each neighbor $j \in \mathcal{N}_i$}
            \State $w_{ij} \leftarrow \max(1, |x_j - x_i| / X_{ij})$
            \State $sx \leftarrow sx + x_j \cdot w_{ij}$
            \State $tw \leftarrow tw + w_{ij}$
            \If{$w_{ij} > 1$}
                \State $\mathit{violation} \leftarrow \text{True}$
            \EndIf
        \EndFor
        \If{$\mathit{violation}$}
            \State $x_i^{\mathrm{ref}} \leftarrow sx / tw$
            \State $x_{\text{mid}} \leftarrow \frac{x_i^0 + x_i^{\mathrm{ref}}}{2}$
            \If{$\ell_i \le u_i$}
                \State $\tilde{x}_i \leftarrow \mathrm{clip}(x_{\text{mid}}, \ell_i, u_i)$
            \Else
                \State $\tilde{x}_i \leftarrow x_{\text{mid}}$
            \EndIf
        \Else
            \State $\tilde{x}_i \leftarrow x_i$
        \EndIf
    \EndFor
    \State Sort gates in $V_r$ by $\tilde{x}_i$, tie-break by original ordering
    \For{each adjacent pair $(i, j)$ in $V_r$}
        \If{swapping $i$ and $j$ reduces total violations}
            \State Swap the order and position of $i$ and $j$
        \EndIf
    \EndFor
\EndFor
\end{algorithmic}
\end{algorithm}

\subsection{Fixed-Order LP Re-placement}

Our fixed-order LP (FO-LP) re-placement is constrained to iteratively resolve inter-row violations, progressively converging towards a final placement solution by resolving them row by row in topological order.  To ensure the locations of gates in upstream rows are not disturbed, re-placement is performed over rows $r \le k \le L$, where $r$ is the current row under placement. 

Gate positions are optimized subject to the boundary and non-overlap constraints defined in~\eqref{eq:boundary}--\eqref{eq:nonoverlap}, with ordering fixed according to Section~\ref{sec:reordering}.

To prioritize violation resolution for connections between row $r$ and its predecessor row $r-1$, hard constraints are enforced:
\begin{align}
|x_u - x_v| \le X_{uv}.
\end{align}
While connections involving downstream rows $k > r$, have slack variables $\epsilon_{uv} \ge 0$ introduced to transform interconnect violations into an optimization parameter to minimize:
\begin{align}
|x_u - x_v| \le X_{uv} + \epsilon_{uv}.
\end{align}
Here, $X_{uv}$ denotes an upper bound on the allowable horizontal displacement, derived from the current placement-and-routing estimate, and it may vary across iterations.

All absolute value terms in the formulation are linearized using standard auxiliary variables to maintain a convex search space. Specifically, we introduce non-negative displacement variables $d_i \ge 0$ for each gate $i \in V_{\text{opt}}$ to represent the coordinate deviation $|x_i - \tilde{x}_i|$ from the reference position. The objective function in Equation~(15) is optimized subject to the following linear boundary constraints:
\begin{align}
    x_i - \tilde{x}_i &\le d_i, \quad \forall i \in V_{\text{opt}} \\
    \tilde{x}_i - x_i &\le d_i, \quad \forall i \in V_{\text{opt}}
\end{align}
Similarly, the absolute horizontal displacement constraints $|x_u - x_v| \le X_{uv} + \epsilon_{uv}$ for downstream connections ($r_u > r$) are linearized using twin inequalities:
\begin{align}
    x_u - x_v &\le X_{uv} + \epsilon_{uv}, \quad \forall (u,v) \in E, r_u > r \\
    x_v - x_u &\le X_{uv} + \epsilon_{uv}, \quad \forall (u,v) \in E, r_u > r
\end{align}
This formulation yields a standard, highly scalable linear program that can be solved optimally in polynomial time.

Subject to the hard interconnect constraints at the 
current frontier, the LP objective minimizes the slack violations in downstream connections and the displacement from the initial placement:
\begin{equation}
\min \;
\lambda_s \sum_{\substack{(u,v) \in E \\ r_u > r}} \epsilon_{uv}
+ \lambda_m \sum_{i \in V_{\mathrm{opt}}} |x_i - \tilde{x}_i|,
\end{equation}
where $\tilde{x}_i$ denotes the reference position obtained from the reordering step (Algorithm~\ref{alg:reordering}), and $V_{\mathrm{opt}} = \bigcup_{k=r}^{L} V_k$ denotes the set of gates included in the optimization. 

\subsection{Diagnostic LP}

Although the FO-LP enforces interconnect constraints at the current frontier as hard constraints, these constraints may not always be satisfiable. Rather than uniformly treating all incident edges as violating,  we construct a diagnostic LP restricted to row $r$, in which all interconnect constraints are relaxed using slack variables $\epsilon_{uv} \ge 0$:
\begin{align}
|x_u - x_v| \le X_{uv} + \epsilon_{uv}, \quad \forall (u,v) \in E_r,
\end{align}

The LP minimizes the total violation:
\begin{equation}
\min \sum_{(u,v)\in E_r} \epsilon_{uv}
\end{equation}

The resulting slack values identify which specific edges cannot be resolved through placement alone and guide the retiming engine in restoring feasibility.

\subsection{LP-Based Legalization and Track Estimation}
\label{subsec:lp_legalization}
After pipelining and retiming, we perform fixed-order LP-based legalization to obtain a legalized placement that satisfies all boundary and non-overlap constraints, serving as a feasible starting point for the next iteration.

Following legalization, routing demand between adjacent rows is estimated using an interval-density model: each inter-row connection is projected onto the horizontal axis, retaining only its x-component, and the maximum interval overlap is computed to estimate the required number of routing tracks. The estimated track count is used to update the row spacing, which in turn determines $Y_{uv}$ for each connection, ensuring placement solutions accurately reflect routing channel widths.

\section{Retiming and Buffer Cut Insertion}\label{sec:retimingeng}
If placement alone cannot resolve interconnect violations between row $r-1$ and row $r$, we fix the violating paths while minimizing overhead by combining retiming and pipelining. 

Let $C$ be the set of violating edges originating from row $r-1$ that must be resolved, and let $A$ be the set of non-violating edges from the same row. Specifically, the diagnostic LP identifies $C$ as the edges with $\epsilon_{uv} > 0$ and $A$ as edges with $\epsilon_{uv} =0$. Moreover, to prioritize resolving violations in near rows, we let $D$ specify the maximum number of rows beyond $r$ that may be explored during retiming or buffer insertion.

To incorporate placement awareness, we are given a set $B \subseteq V$ of retimable buffers. Additionally, edges are weighted to guide optimization, with a weight of $1$ assigned to violating edges and $0$ otherwise. A buffer $j$ is considered retimable if its source gate $i$ can be directly connected to its sink $k$ without introducing a new violation when $j$ is moved. Formally, we define: 
\begin{equation}
\mathrm{Retimable}(j) =
\begin{cases}
1, & \text{if } |i_x - k_x| + |j_y - k_y| \le \beta_i \\
0, & \text{otherwise}
\end{cases}
\end{equation}

We first attempt to repair violations without increasing circuit depth by exploring the transitive fanout of each edge in $C$. Specifically, we search for a cut within each fanout cone consisting solely of retimable buffers. If such cuts exist for all edges in $C$, the corresponding buffers are retimed to level $r$, resolving violations without increasing logical depth.

If no feasible set of cuts exists, increasing circuit depth becomes necessary. In this case, we formulate violation resolution as a constrained maximum-topological-cut problem on the circuit graph. Rather than inserting buffers greedily or uniformly as a single row, we seek to select a maximum subset of violating edges on which buffers will be placed such that all edges in $C$ are pipelined while global path balancing constraints are preserved. 

The overall retiming procedure is summarized in Algorithm~\ref{alg:retime_engine}, while Sections~\ref{sec:tfo} and~\ref{sec:bci} describe the transitive fanout cut search and BCI formulations in detail.

\begin{algorithm}[t]
\caption{RETIMING\_ENGINE$(G, C, A, B,r,D)$}
\label{alg:retime_engine}
\begin{algorithmic}[1]
\Statex \textbf{Input:}
\Statex $G=(V,E)$: directed acyclic graph representation of the circuit
\Statex $C$: set of violating edges originating from level $r-1$
\Statex $A$: set of non-violating edges at level $r-1$
\Statex $B$: set of retimable buffers
\Statex $r$: source logic level under consideration
\Statex $D$: maximum circuit depth explored during retiming or buffer insertion
\Statex \textbf{Output:} $G'$: path-balanced design with all edges in $C$ pipelined

\State $\mathrm{\textit{buffer\_list}} \gets \emptyset$
\State $\mathrm{\textit{pass}} \gets \textbf{true}$

\ForAll{$e \in C$}
    \ForAll{$v \in \textit{sink(e)}$}
        \State $(\mathrm{\textit{pass}}, \mathrm{\textit{bfrs}}) \gets \mathrm{TFO\_CUT}(G,v, B,r,D)$ \Comment{Alg.~\ref{alg:tfo_cut}}
        \If{$\neg \mathrm{\textit{pass}}$}
            \State \textbf{break}
        \Else
            \State $\mathrm{\textit{buffer\_list}} \gets \mathrm{\textit{buffer\_list}} \cup \mathrm{\textit{bfrs}}$
        \EndIf
    \EndFor
    \If{$\neg \mathrm{\textit{pass}}$}
        \State \textbf{break}
    \EndIf
\EndFor

\If{$pass$}
    \State $G' \gets \mathrm{ApplyRetiming(G, \textit{buffer\_list}, r)}$
\Else
    \State $\mathrm{MaxTopCut} \gets \mathrm{BCI}(G,C, A,r,D)$ \Comment{Alg.~\ref{alg:bci}}
    \State $G' \gets \mathrm{InsertBuffers(G, \mathrm{MaxTopCut})}$
\EndIf

\State \Return $G'$

\end{algorithmic}
\end{algorithm}
\subsection{Retiming of Existing Buffers}\label{sec:tfo}

Retiming existing buffers to pipeline all edges in $C$ can be viewed as a variant of the classical retiming problem~\cite{leiserson1991retiming}, but with a simplified objective. Rather than minimizing global buffer count, our goal is to identify a feasible set of retimable buffers that can be propagated to all edges in $C$.

To maintain compatibility with our row-by-row placement framework, we restrict retiming to downstream buffers, within a bounded range. Specifically, we only consider buffers that can be retimed within $\Delta$ rows of the current level $r$, yielding a feasible region $[r, r+D]$. 
Furthermore, since row-based placement ensures that no edge in $C$ lies in the transitive fanout of another, retimable buffers can be shared across edges without conflict.

Under these conditions, the problem decomposes into independently identifying, for each edge $e_{uv}$ in $C$, a cut of retimable buffers within its transitive fanout cone. We implement this as a depth-first search (DFS) over the fanout graph, terminating either when the depth reaches $r+D$ or when a valid cut is identified. 

The procedure is formalized in Algorithm~\ref{alg:tfo_cut}, which operates on a single sink node $v$ of an edge in $C$ and returns a feasible cut within its transitive fanout cone, if one exists. These local solutions are then aggregated across all edges in $C$ by Algorithm~\ref{alg:retime_engine} to determine a globally feasible retiming. The overall complexity is $O\bigl(|C|(|V|+|E|)\bigr)$, with worst-case complexity $O\bigl(|E|(|V|+|E|)\bigr)$.

\begin{algorithm}[htbp]
\caption{TFO\_CUT$(G, v, B,r, D)$}
\label{alg:tfo_cut}
\begin{algorithmic}[1]
\Statex \textbf{Input:}
\Statex $G=(V,E)$: directed acyclic graph representation of the circuit
\Statex $v$: root node of the transitive fanout to explore
\Statex $B$: set of retimable buffers
\Statex $r$: source logic level under consideration
\Statex $D$: maximum depth explored for retiming
\Statex \textbf{Output:}
\Statex $pass$: feasibility indicator
\Statex $buffs$: set of buffer nodes to retime

\State $\mathrm{\textit{buffs}} \gets \emptyset$
\State $\mathrm{\textit{pass}} \gets \textbf{true}$

\ForAll{$n \in \mathrm{fo}(v)$}
    \If{$n \in B$}
        \State $\mathrm{\textit{buffs}} \gets \mathrm{\textit{buffs}} \cup \{n\}$
    \ElsIf{$\mathrm{level}(n) = r+D$}
        \State \Return $\textbf{false}, \emptyset$
    \Else
        \State $(\mathrm{\textit{pass}}, \mathrm{\textit{new\_buffs}}) \gets \mathrm{TFO\_CUT}(G, n, B,r, D)$
        \If{$\neg \mathrm{\textit{pass}}$}
            \State \Return $\textbf{false}, \emptyset$
        \Else
            \State $\mathrm{\textit{buffs}} \gets \mathrm{\textit{buffs}} \cup \mathrm{\textit{new\_buffs}}$
        \EndIf
    \EndIf
\EndFor

\State \Return $\textbf{true}, \mathrm{\textit{buffs}}$

\end{algorithmic}
\end{algorithm}

\subsection{Buffer Cut Insertion}\label{sec:bci}

When retiming of existing buffers is insufficient to resolve all violations in $C$, we apply \emph{Buffer Cut Insertion (BCI)}. The objective is to select a set of edges on which buffers will be inserted, such that all edges in $C$ are included while maximizing the number of additional violating edges pipelined up to depth $r+D$.

This formulation casts violation resolution as a global edge-selection problem subject to structural and path-balancing constraints. Under path-balancing constraints, this problem reduces to a constrained maximum-weight topological cut, thereby enabling an exact polynomial-time solution.

A topological cut $(S, T)$, where $S, T \subseteq V$, $S \cap T = \emptyset$, and $S \cup T = V$, partitions the graph such that no edge in $E$ is directed from a node in $T$ to a node in $S$. Equivalently, $S$ is predecessor-closed and $T = V \setminus S$ is successor-closed. In the context of BCI, nodes in $S$ retain their logic level, while nodes in $T$ experience a uniform level increase due to buffer insertion.

\begin{theorem}[Path-Balancing Preservation under Topological Cut]
Let $(S,T)$ be a topological cut of a path-balanced DAG $G=(V,E)$,
with $\mathrm{PI}\subseteq S$. If a buffer is inserted on every cut edge
\[
E_{S,T}=\{(u,v)\in E \mid u\in S,\;v\in T\},
\]
then every path from a primary input to any $v\in T$ receives exactly
one additional buffer and path balancing is preserved.
\end{theorem}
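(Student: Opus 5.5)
The plan is to reduce the claim to a path-crossing property of topological cuts and then read off the new levels. Path balancing of $G$ means there is a level function $\ell:V\to\mathbb{Z}_{\ge 0}$ with $\ell(\mathrm{PI})=0$ and $\ell(v)=\ell(u)+1$ for every $(u,v)\in E$. Equivalently, every PI-to-$v$ path has exactly $\ell(v)$ edges. Let $G'$ be the graph obtained by subdividing each edge of $E_{S,T}$ with a new buffer node. I will exhibit a valid level function $\ell'$ on $G'$ for which every edge increases the level by exactly one. That suffices, because path balancing of $G'$ is exactly the existence of such a function.

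The key step is the following lemma. Let $P$ be any directed path in $G$ starting at a primary input. Then $P$ uses at most one edge of $E_{S,T}$. If $P$ ends in $T$, it uses exactly one such edge. If $P$ ends in $S$, it uses none.

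I will prove the lemma by walking along $P=(v_0,\dots,v_k)$ with $v_0\in\mathrm{PI}\subseteq S$. Since no edge goes from $T$ to $S$, once some $v_i\in T$, every later $v_j$ is also in $T$. Hence the membership sequence along $P$ has the form $S\cdots S\,T\cdots T$. The path therefore crosses from $S$ to $T$ at most once, and it crosses exactly once precisely when $v_k\in T$. Every crossing edge $(v_i,v_{i+1})$ with $v_i\in S$ and $v_{i+1}\in T$ lies in $E_{S,T}$, and these are the only edges of $P$ in $E_{S,T}$.

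Paths in $G'$ from primary inputs correspond bijectively to paths in $G$: each cut edge is simply replaced by its two-edge subdivision. So a PI-to-$v$ path in $G'$ has length $\ell(v)+\mathbf{1}[v\in T]$, independent of the path chosen. This proves both halves of the statement: every such path to $v\in T$ gains exactly one buffer, and balancing is preserved.

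For completeness I will write the explicit levels. Set $\ell'(v)=\ell(v)$ for $v\in S$ and $\ell'(v)=\ell(v)+1$ for $v\in T$, and give the buffer on $(u,v)$ the level $\ell(u)+1$. I will then check the edge types one by one:
\begin{itemize}
\item an $S\to S$ edge keeps its unit increase;
\item a $T\to T$ edge keeps its unit increase;
\item a subdivided cut edge goes from $\ell(u)$ to $\ell(u)+1$ to $\ell(v)+1=\ell(u)+2$.
\end{itemize}
There are no $T\to S$ edges, so these cases are exhaustive.

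I expect no real obstacle here. The only delicate point is making precise what ``path balancing preserved'' means. I will state it as the existence of a consistent level function, or equivalently as all PI-to-$v$ paths having equal length, and I will note that the hypothesis $\mathrm{PI}\subseteq S$ is what fixes the baseline. If a primary input were in $T$, paths starting there would cross no cut edge and would not gain a buffer, which would break balancing for nodes reachable from both sides.
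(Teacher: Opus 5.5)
Your proof is correct and follows essentially the same route as the paper's. Both arguments rest on the observation that, since no edge goes from $T$ to $S$, a PI-to-$v$ path with $v\in T$ crosses $E_{S,T}$ exactly once, so all such paths gain exactly one buffer. The differences are small refinements: the paper argues by contradiction and only asserts that nodes in $S$ are unaffected, whereas you argue directly, justify the $S$-case with the same crossing lemma, and give an explicit level function $\ell'$ on the subdivided graph.
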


\begin{proof}
Suppose, for contradiction, that path balancing is violated after
buffer cut insertion. Then there exists some $v\in T$ with two
primary-input-to-$v$ paths that receive different numbers of inserted
buffers.

Let $P=(u_1,\ldots,u_m)$ be any directed path from a primary input to
$v$, and let
\[
E_P=\{(u_i,u_{i+1}) \mid 1\le i<m\}.
\]
Consider any such path
$P$, where $u_1\in\mathrm{PI}\subseteq S$ and
$u_m=v\in T$. 
Since $u_1\in S$ and $u_m=v\in T$, the path must contain at least one
edge in $E_{S,T}$. Suppose it contains two such edges. After its first
transition from $S$ to $T$, it must return to $S$ before crossing from
$S$ to $T$ again. This requires some edge $(u_j,u_{j+1})\in E$ with
\[
u_j\in T,\qquad u_{j+1}\in S,
\]
which contradicts the definition of a topological cut, since no edge
may be directed from $T$ to $S$. Therefore,
\[
|E_P\cap E_{S,T}|=1
\]
for every primary-input-to-$v$ path.

Since the original circuit is path balanced, for any node $v\in T$,
all primary-input-to-$v$ paths have equal logical depth. That is, for
any two such paths $P_1$ and $P_2$,
\[
|P_1| = |P_2|.
\]
As established above, every primary-input-to-$v$ path crosses
$E_{S,T}$ exactly once and therefore receives exactly one additional
buffer. Hence, after buffer insertion,
\[
|P_1'| = |P_1|+1
        = |P_2|+1
        = |P_2'|.
\]
Thus, all paths terminating at $v$ remain equal in logical depth.
Since this argument holds for every $v\in T$, while nodes in $S$
remain unchanged, the path-balancing condition is preserved
throughout the circuit, contradicting the assumption that buffer cut
insertion introduces a path imbalance.
\end{proof}

\vspace{0.4em}
\begin{figure}[t]
\centering

\begin{subfigure}[t]{0.48\columnwidth}
\centering
\includegraphics[width=\linewidth]{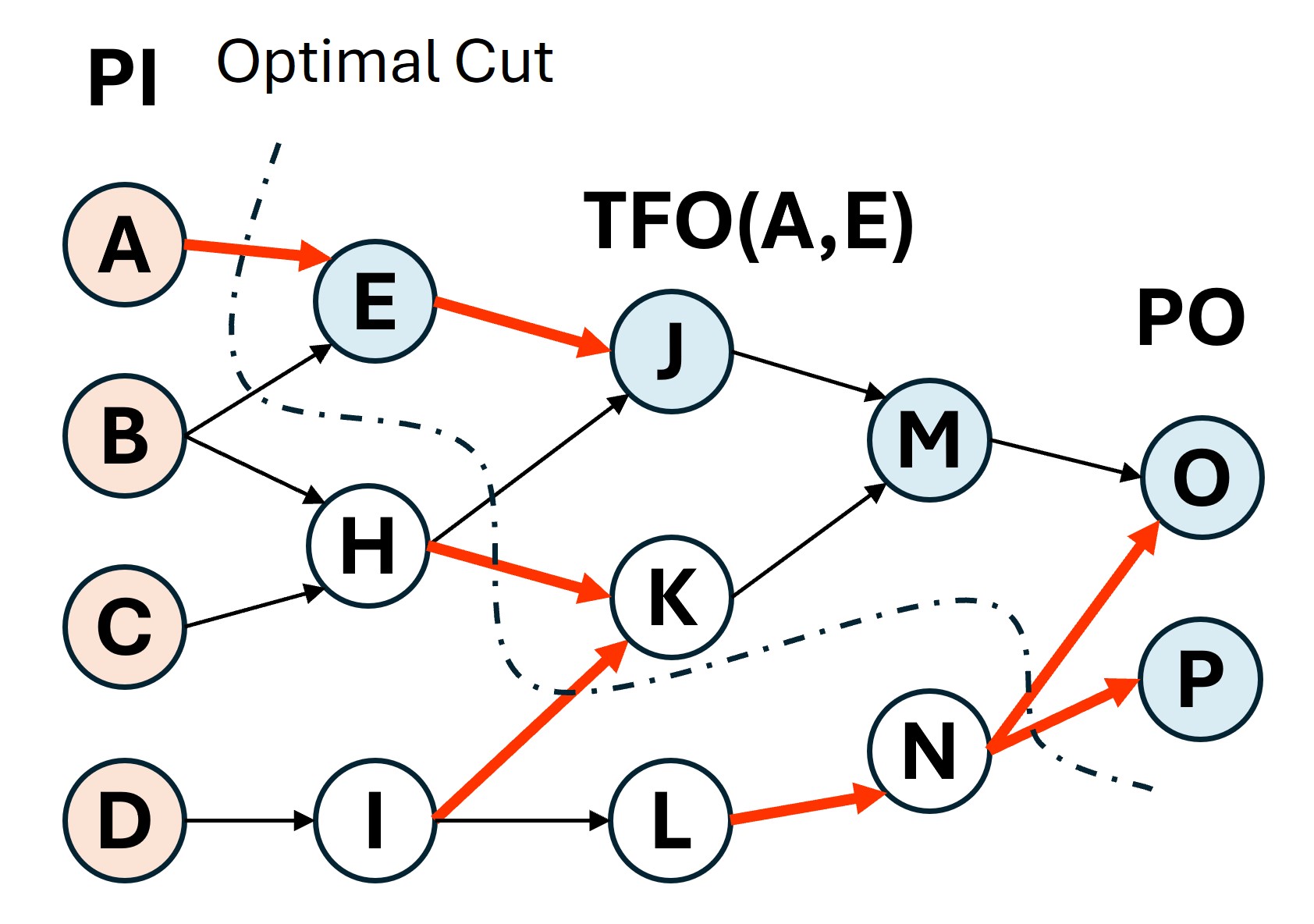}
\caption{Original graph $G$}
\label{fig:MaxTopCutG}
\end{subfigure}\hfill
\begin{subfigure}[t]{0.48\columnwidth}
\centering
\includegraphics[width=\linewidth]{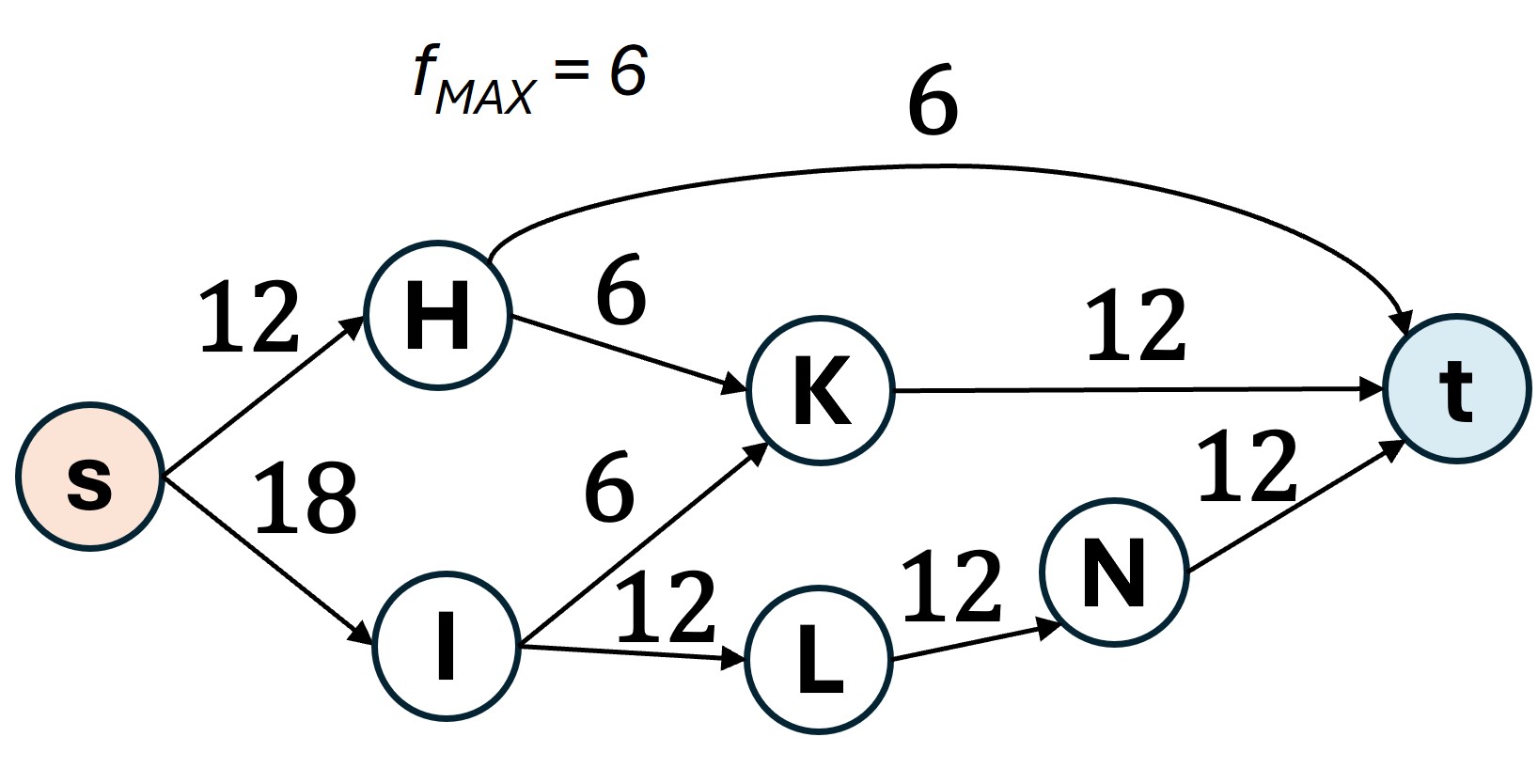}
\caption{Augmented graph $G'$ with initial flow $f$ along each edge labeled}
\label{fig:MaxTopCutFlow}
\end{subfigure}

\vspace{0.4em}

\begin{subfigure}[t]{0.48\columnwidth}
\centering
\includegraphics[width=\linewidth]{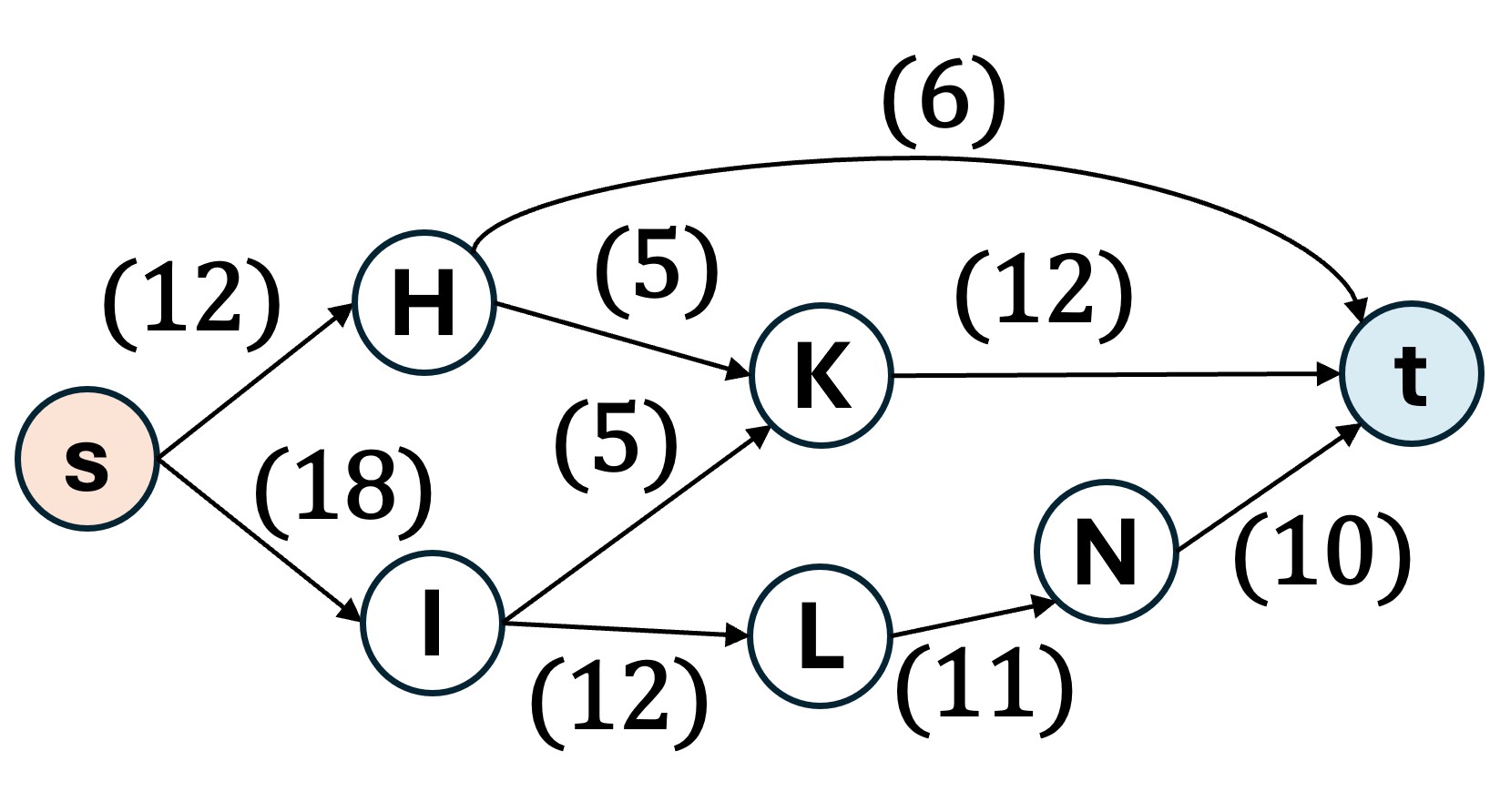}
\caption{$s$-$t$ graph $G^+$ with capacities assigned as flow $f$ along edge in $G'$ - edge weight in $G$}
\label{fig:MaxTopCutST}
\end{subfigure}\hfill
\begin{subfigure}[t]{0.48\columnwidth}
\centering
\includegraphics[width=\linewidth]{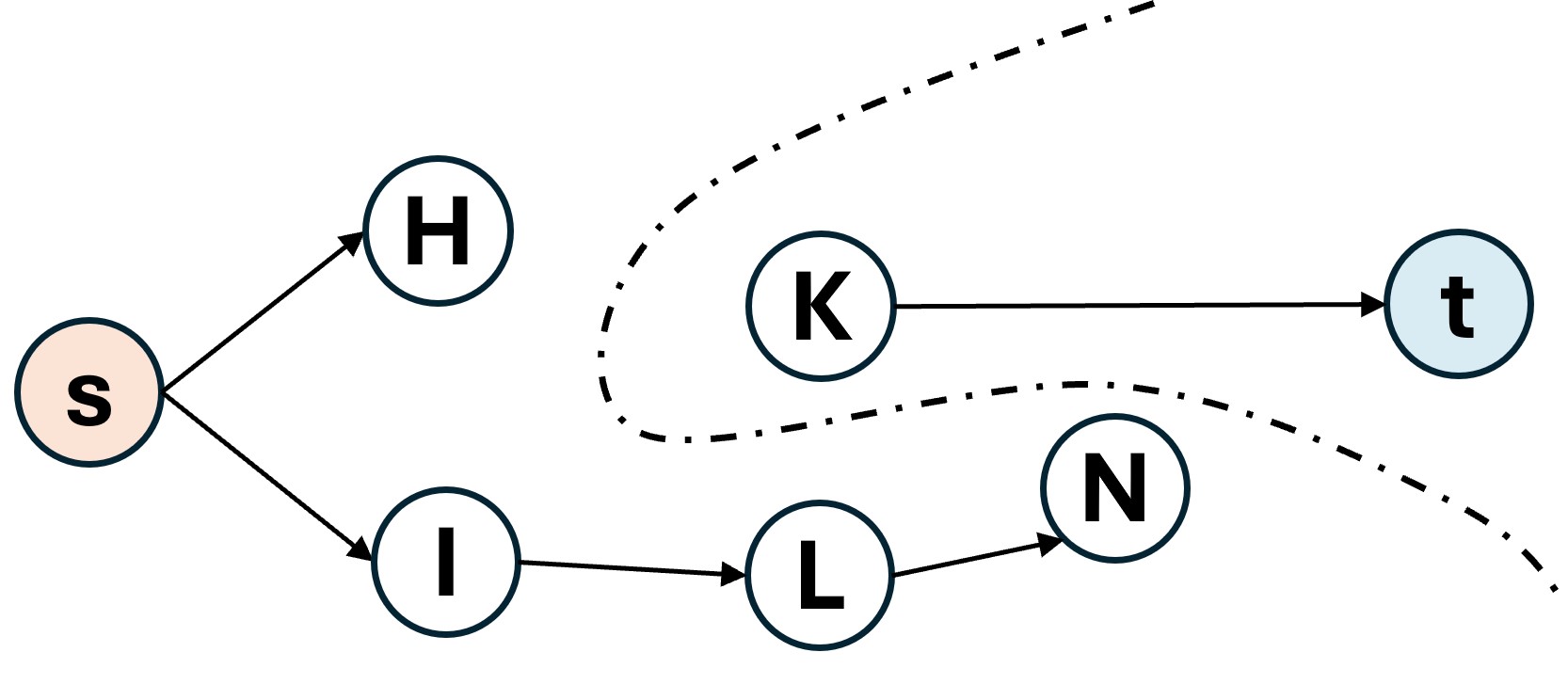}
\caption{Resulting min-cut giving the optimal MaxTopCut of (a)}
\label{fig:MaxTopCutresidual}
\end{subfigure}

\caption{Maximum topological cut via a minimum $s$-$t$ cut reduction.}
\label{fig:BCIFormulation}

\vspace{-0.8em}
\end{figure}
We optimally solve for BCI locations using the MaxTopCut formulation from~\cite{MaxTopCut}, which reduces the problem to a minimum $s$-$t$ cut. We additionally incorporate BCI-specific constraints: (i) all edges in $C$ must belong to the cut $(S,T)$, (ii) primary inputs (row $r-1$) must lie in $S$, and (iii) primary outputs (row $D+r$) must lie in $T$.

To enforce these constraints, we construct an augmented graph $G'$ by merging all primary inputs into a super-source $s$ and all primary outputs into a super-sink $t$ (see Fig.~\ref{fig:MaxTopCutG}). The edges in $C$ are removed, and the transitive fanouts of their sink nodes are merged into $t$, ensuring that these edges are necessarily cut. Any resulting direct edges between $s$ and $t$ are removed. Notably, collapsing multiple nodes into $t$ causes some of the resulting augmented edges in $G'$ to correspond to multiple edges of $G$.

In $G'$, we assign edge weights of $1$ for each violating edge  $e$ of $G$ contained in the augmenting edge $e'$. Following~\cite{MaxTopCut}, we define

\begin{equation}
f_{\max} = 1 + \sum_{e'\in E'} w(e'),
\end{equation}
and construct a flow $f$ such that $f(i,j) \ge f_{\max}$ for all edges (Fig.~\ref{fig:MaxTopCutFlow}). A new graph $G^+$ (Fig.~\ref{fig:MaxTopCutST}) is 
\begin{equation}
c^+_{ij} = f(i,j) - w(e').
\end{equation}

A minimum $s$-$t$ cut in $G^+$ yields the maximum topological cut of $G'$~\cite{MaxTopCut} (Fig.~\ref{fig:MaxTopCutresidual}). The resulting solution is obtained from reachability in the residual graph, where the set of nodes reachable from $s$ defines $S$, and edges from $S$ to $T = V \setminus S$ constitute the cut. Optimal edge selection for BCI can therefore be solved via a maximum flow computation in polynomial time~\cite{MaxTopCut}, with complexity $O(|V||E|)$ using Orlin's algorithm~\cite{OrlinFlow}.

\begin{algorithm}[t]
\caption{BCI$(G,C, A, r,D)$}
\label{alg:bci}
\begin{algorithmic}[1]
\Statex \textbf{Input:}
\Statex $G=(V,E)$: DAG representation of the circuit
\Statex $C$: set of violating edges from row $r-1$
\Statex $A$: set of adjacent non-violating edges
\Statex $r$: source logic level under consideration
\Statex $D$: maximum depth explored for cut insertion
\Statex \textbf{Output:}
\Statex $\mathrm{\textit{edge\_list}}$: set of edges selected for buffer insertion

\State Assign edge weights:
\Statex \hspace{1em} $w(e) \gets 1$ if $e$ is violating, else $0$

\State Construct augmented graph $G' = (V', E')$:
\Statex \hspace{1em} $s \gets \{ \mathrm{src}(e) \mid e \in A \cup C \}$
\Statex \hspace{1em} $t \gets \{\mathrm{TFO}(\mathrm{sink}(e)\mid e \in C\} \cup \{ v \mid \mathrm{level}(v)=r+D\}$

\State Compute $f_{\max} \gets 1 + \sum_{e' \in E'} w(e')$
\State Construct feasible flow $f$ such that $f(e') \ge f_{\max}$ for all $e'$

\State Build graph $G^+$ with capacities:
\Statex \hspace{1em} $c(e) \gets f(e) - w(e')$

\State Compute minimum $s$-$t$ cut $(S,T)$ on $G^+$

\State $\mathrm{\textit{edge\_list}} \gets \{ (u,v) \in E' \mid u \in S, v \in T \}$

\State \Return $\mathrm{\textit{edge\_list}}$

\end{algorithmic}
\end{algorithm}

Algorithm~\ref{alg:bci} summarizes this procedure and returns the maximum topological cut containing all edges in $C$, which is then used by Algorithm~\ref{alg:retime_engine} to insert buffers.
\setcounter{table}{1}
\begin{table*}[tb]
\centering
\caption{Comparison of placement overhead to state-of-the-art methods in terms of depth increase, inserted JJs, latency, and runtime.}
\label{tab:full_comparison}

\setlength{\tabcolsep}{4pt}

\begin{adjustbox}{max width=\textwidth}
\begin{tabular}{l|ccccc|cc|cc|cc}
\hline
\multirow{2}{*}{\textbf{Benchmark}} &
\multicolumn{5}{c|}{\textbf{Depth Increase (BRI/BCI)}} &
\multicolumn{2}{c|}{\textbf{Inserted JJs}} &
\multicolumn{2}{c|}{\textbf{Latency (ps)}} &
\multicolumn{2}{c}{\textbf{Runtime (s)}} \\

&
\textbf{GORDIAN~\cite{GORDIAN}} &
\textbf{TAAS~\cite{TAAS}} &
\textbf{SuperFlow~\cite{superflow}} &
\textbf{DLPlace~\cite{DLPlace}} &
\textbf{RIVERPlace} &
\textbf{SuperFlow~\cite{superflow}} &
\textbf{RIVERPlace} &
\textbf{DLPlace~\cite{DLPlace}} &
\textbf{RIVERPlace} &
\textbf{SuperFlow~\cite{superflow}} &
\textbf{RIVERPlace} \\
\hline

\texttt{adder8}   & 24 & 24 & 16 & \textbf{0}  & 1  & 1,210  & \textbf{70}   & 594  & \textbf{402}  & 12.1  & \textbf{4.3} \\
\texttt{apc32}    & 26 & 26 & 26 & \textbf{0}  & 1  & 1,294  & \textbf{82}   & 625  & \textbf{496}  & 13.8  & \textbf{4.9} \\
\texttt{apc128}   &117 &110 & 67 & 24 & \textbf{4}  & 8,812  & \textbf{592}  &2401 & \textbf{2091} &374.8 & \textbf{34.8} \\
\texttt{c432}     & 46 & 45 & 29 & 8  & \textbf{2}  & 2,786  & \textbf{170}  &1310 & \textbf{800}  &162.5 & \textbf{14.5} \\
\texttt{c499}     & 62 & 62 & 59 &40  & \textbf{7}  & 14,104  & \textbf{1,262} &2926 & \textbf{1459} &113.4 & \textbf{33.8} \\
\texttt{c1355}    & 58 & 58 & 56 &32  & \textbf{13} & 16,008  & \textbf{2,286} &2631 & \textbf{1866} & 50.1 & \textbf{45.4} \\
\texttt{c1908}    & 67 & 66 & 68 &32  & \textbf{6}  & 10,692 & \textbf{888}  &2681 & \textbf{1109} &517.5 & \textbf{37.4} \\
\texttt{decoder}  & 34 & 33 & 43 &13  & \textbf{4}  & 5,686  & \textbf{560}  &1092 & \textbf{643}  &690.9 & \textbf{12.3} \\
\texttt{sorter32} & 29 & 29 & 29 &27  & \textbf{12} & 4,980  & \textbf{1,522} &2046 & \textbf{1167} &353.3 & \textbf{33.3} \\
\hline

\textbf{Avg. Ratio}
&15.3&15.0&12.3&3.2&\textbf{1.0}
&12.0&\textbf{1.0}
&1.6&\textbf{1.0}
&12.5&\textbf{1.0}\\
\hline
\end{tabular}
\end{adjustbox}
\end{table*}

\setcounter{table}{0}
\begin{table}[htbp]
\centering
\caption{AQFP cell library used for all designs.}
\label{tab:cell_drive}
\begin{tabular}{lc}
\toprule
\textbf{Cell Type} &
\textbf{Max. Interconnect ($\mu$m)}\\
\midrule
\texttt{Buffers}     &  800\\
\texttt{Logic Gates} &  300\\
\texttt{Splitters} &  200\\

\bottomrule
\end{tabular}
\end{table}

\section{Results}

We evaluate RIVERPlace on both the published AQFP benchmark suite used by prior placement works~\cite{GORDIAN,TAAS,superflow,DLPlace} and substantially larger open-source AQFP netlists. Table~\ref{tab:cell_drive} summarizes the interconnect limits used for each cell type. Post-routing timing closure is validated using~\cite{qPRO}. Unless otherwise stated, reported averages are normalized to RIVERPlace and averaged across all completed benchmarks. All RIVERPlace experiments were conducted on an EPYC 7763 Milan CPU.

\subsection{Comparison to the SOTA}\label{sec:SOTA}

We compare RIVERPlace against the state-of-the-art AQFP placement approaches GORDIAN (DATE'21)~\cite{GORDIAN}, TAAS (DAC'22)~\cite{TAAS}, DLPlace (ICCAD'23)~\cite{DLPlace}, and SuperFlow (DATE'24)~\cite{superflow} using the common AQFP benchmark suite from~\cite{Chen2019}.

Table~\ref{tab:full_comparison} compares the placement overhead, JJ count, latency, and runtime. With the exception of inserting one additional row on the small benchmarks \texttt{adder8} and \texttt{apc32} compared to DLPlace, RIVERPlace outperforms all prior methods across every other reported metric.

More specifically, while DLPlace represents the previous state of the art in placement quality, RIVERPlace achieves more than a 3$\times$ reduction in row insertion overhead and a 38\% reduction in average 
latency.\footnote{DLPlace does not report JJ count or runtime, nor does it demonstrate scalability to larger benchmarks, as it repeats the complete global and detailed placement flow after each buffer insertion.}
In contrast, SuperFlow represents the current state of the art in runtime scalability, while providing only moderate improvements in placement quality over GORDIAN and TAAS. Compared to SuperFlow, RIVERPlace achieves, on average, more than an order-of-magnitude reduction in inserted rows, inserted JJ count, and runtime.\footnote{Note that SuperFlow reports runtimes on a different processor; the runtime comparison should be interpreted qualitatively.}

\subsection{Scalability to Larger Designs}

A key limitation of prior AQFP placement literature is that comparisons between placement algorithms have largely been restricted to the relatively small benchmark circuits reported in Table~\ref{tab:full_comparison}. 

\setcounter{table}{2}
\begin{table}[htbp]
\centering
\caption{Placement results of GORDIAN and RIVERPlace on benchmark circuits on the full AQFP benchmark suite. (Area in $mm^2$ and runtime in seconds).}
\label{tab:riverplace_placement_results}

\setlength{\tabcolsep}{3pt}
\renewcommand{\arraystretch}{1.08}
\begin{adjustbox}{max width = \linewidth}

\begin{tabular}{lccccccccc}
\toprule
\multirow{2}{*}{\textbf{Benchmarks}} &
\multicolumn{2}{c}{\textbf{Input Netlist~\cite{LayoutAwareAQFPGitHub}}} &
\multicolumn{3}{c}{\textbf{GORDIAN~\cite{GORDIAN}}} &
\multicolumn{4}{c}{\textbf{RIVERPlace}} \\
\cmidrule(lr){2-3}\cmidrule(lr){4-6}\cmidrule(lr){7-10}
& \textbf{Cells} & \textbf{Depth} &
\textbf{BRI} & \textbf{JJs} & \textbf{Area} &
\textbf{BCI} & \textbf{JJs} & \textbf{Area} & \textbf{Runtime} \\
\midrule
\texttt{adder1}     & 25     & 8   & 0   & 78      & 0.11   & 0  & 78      & 0.09  & 0.5 \\
\texttt{adder8}     & 627    & 35  & 2   & 1,642   & 2.22   & 1  & 1,586   & 1.42  & 4.9 \\
\texttt{mult8}      & 3,036  & 74  & 54  & 13,194  & 19.55  & 8  & 8,794   & 9.64  & 110 \\
\texttt{apc16}      & 131    & 18  & 2   & 416     & 0.82   & 0  & 378     & 0.53  & 1.3 \\
\texttt{apc32}      & 307    & 24  & 4   & 1,058   & 2.06   & 1  & 976     & 1.41  & 3.2 \\
\texttt{apc64}      & 696    & 31  & 15  & 3,128   & 6.85   & 2  & 2,312   & 3.63  & 10.4 \\
\texttt{apc128}     & 1,533  & 39  & 41  & 9,516   & 23.78  & 4  & 5,218   & 9.27  & 27.6 \\
\texttt{c17}        & 19     & 5   & 0   & 62      & 0.11   & 0  & 62      & 0.09  & 0.4 \\
\texttt{c432}       & 1,040  & 39  & 13  & 3,434   & 5.21   & 3  & 2,782   & 3.19  & 15.2 \\
\texttt{c499}       & 1,770  & 32  & 48  & 12,098  & 20.79  & 8  & 6,534   & 8.73  & 44.2 \\
\texttt{c880}       & 2,152  & 40  & 37  & 10,192  & 13.74  & 6  & 6,424   & 6.89  & 49.8 \\
\texttt{c1355}      & 1,994  & 36  & 48  & 12,354  & 18.53  & 7  & 6,772   & 7.60  & 42.5 \\
\texttt{c1908}      & 1,714  & 37  & 33  & 8,230   & 12.00  & 6  & 5,472   & 5.92  & 38.1 \\
\texttt{c2670}      & 3,147  & 29  & 55  & 20,248  & 28.17  & 7  & 9,816   & 9.77  & 63.4 \\
\texttt{c3540}      & 4,557  & 54  & 134 & 40,222  & 60.92  & 40 & 23,428  & 26.54 & 466.2 \\
\texttt{sorter32}   & 1,054  & 32  & 31  & 7,032   & 10.62  & 13 & 5,660   & 6.45  & 39.2 \\
\texttt{sorter48}   & 2,195  & 40  & 81  & 20,934  & 32.95  & 20 & 12,076  & 13.46 & 131.0 \\
\texttt{c5315}      & 8,228  & 41  & 258 & 142,468 & 192.34 & 65 & 55,476  & 60.11 & 1,674.5 \\
\texttt{c6288}      & 11,039 & 181 & 302 & 80,344  & 118.97 & 53 & 40,636  & 44.74 & 1,994.4 \\
\texttt{c7552}      & 11,664 & 56  & 255 & 150,632 & 212.51 & 81 & 65,612  & 71.15 & 2,922.1 \\
\texttt{alu32}      & 18,586 & 171 & 339 & 138,448 & 212.41 & 62 & 64,248  & 78.77 & 4,709.13 \\
\midrule
\multicolumn{3}{c}{\textbf{Average Ratio}} &
\textbf{5.29} & \textbf{1.6} & \textbf{2.10} &
\textbf{1} & \textbf{1} & \textbf{1} & \textbf{--} \\
\bottomrule
\end{tabular}
    
\end{adjustbox}
\end{table}
Therefore, to demonstrate scalability, we use a recently open-sourced AQFP benchmark set that contains much larger circuits~\cite{LayoutAwareAQFPGitHub}. Because the only prior SOTA codebase available to us is GORDIAN, we rerun GORDIAN directly using identical netlists, cell libraries, and experimental settings as a baseline.

These benchmark netlists have been optimized specifically for AQFP placement~\cite{LayoutAwareAQFP}, resulting in substantially lower placement overheads for all
methods. 

Nevertheless, RIVERPlace still delivers more than a 2$\times$ reduction in total circuit area, as reported in Table~\ref{tab:riverplace_placement_results}.

Current AQFP fabrication technology supports approximately 20,000 Josephson junctions~\cite{MANA,JJ20K}. Using GORDIAN, circuits such as \texttt{c5315} exceed this limit by more than a factor of seven, whereas RIVERPlace reduces the overhead to approximately three times the current fabrication limit. These results demonstrate that globally optimizing pipelining locations through BCI is substantially more effective than uniform buffer row insertion for large AQFP designs and provides a scalable path as fabrication capabilities continue to improve.

\subsection{Results after Routing and Timing Closure}

We integrate RIVERPlace into the complete RTL-to-GDSII qPALACE flow~\cite{qPALACE}, which includes AQFP routing and post-routing timing closure~\cite{qPRO}. The results in Table~\ref{tab:riverplace_routed_results} represent the first post-routing, timing-closed results for the larger benchmark suite reporting both area and performance. Notably, prior AQFP placement works~\cite{DLPlace,TAAS,GORDIAN,superflow} generally do not report post-placement or post-routing area, even for substantially smaller circuits.

To ensure that placement optimization accurately reflects post-routing interconnect lengths, RIVERPlace accounts for routing density during legalization using the interval-density track estimation model described in Section~III-E. The estimated routing demand is used to dynamically adjust channel widths and row spacing, allowing placement to be optimized with routing requirements incorporated directly into interconnect constraints. As demonstrated by the post-routing results in Table~\ref{tab:riverplace_routed_results}, the resulting placements remain routable and satisfy interconnect constraints while preserving RIVERPlace's substantial area advantage. Figure~\ref{fig:apc128} illustrates the resulting post-routing, timing-closed layout of the \texttt{apc128} benchmark.

\begin{table}[htbp]
\centering
\caption{Routed results RIVERPlace on benchmark circuits.}
\label{tab:riverplace_routed_results}

\setlength{\tabcolsep}{3pt}
\renewcommand{\arraystretch}{1.08}
\begin{tabular}{lccc}
\toprule
\multirow{2}{*}{\textbf{Benchmarks}} &
\multicolumn{3}{c}{\textbf{RIVERPlace-Routed Results}} \\
\cmidrule(lr){2-4}
& \textbf{Latency (ps)} & \textbf{Performance (GHz)} & \textbf{Area ($mm^2$)} \\
\midrule
\texttt{adder1}     & 71    & 5.0 & 0.27 \\
\texttt{adder8}     & 508   & 5.0 & 2.06 \\
\texttt{mult8}      & 1,925 & 5.0 & 11.03 \\
\texttt{apc16}      & 226   & 5.0 & 0.86 \\
\texttt{apc32}      & 381   & 5.0 & 1.86 \\
\texttt{apc64}      & 657   & 5.0 & 4.15 \\
\texttt{apc128}     & 1,390 & 5.0 & 9.99 \\
\texttt{c17}        & 42    & 5.0 & 0.22 \\
\texttt{c432}       & 709   & 5.0 & 3.90 \\
\texttt{c499}       & 1,335 & 5.0 & 9.44 \\
\texttt{c880}       & 1,249 & 5.0 & 7.68 \\
\texttt{c1355}      & 1,282 & 5.0 & 8.36 \\
\texttt{c1908}      & 1,109 & 5.0 & 6.64 \\
\texttt{c2670}      & 1,489 & 4.2 & 10.36 \\
\texttt{c3540}      & 4,040 & 4.0 & 28.14 \\
\texttt{sorter32}   & 1,190 & 5.0 & 7.21 \\
\texttt{sorter48}   & 2,169 & 5.0 & 14.43 \\
\texttt{c5315}      & 8,805 & 2.3 & 61.85 \\
\texttt{c6288}      & 7,487 & 5.0 & 48.65 \\
\texttt{c7552}      & 10,578 & 2.3 & 73.20 \\
\texttt{alu32}      & 12,030 & 3.5 & 82.58 \\
\bottomrule
\end{tabular}
\end{table}

\begin{figure}[htbp]
    \centering
    \begin{minipage}{1\linewidth}
        \centering
        \includegraphics[width=0.8\linewidth]{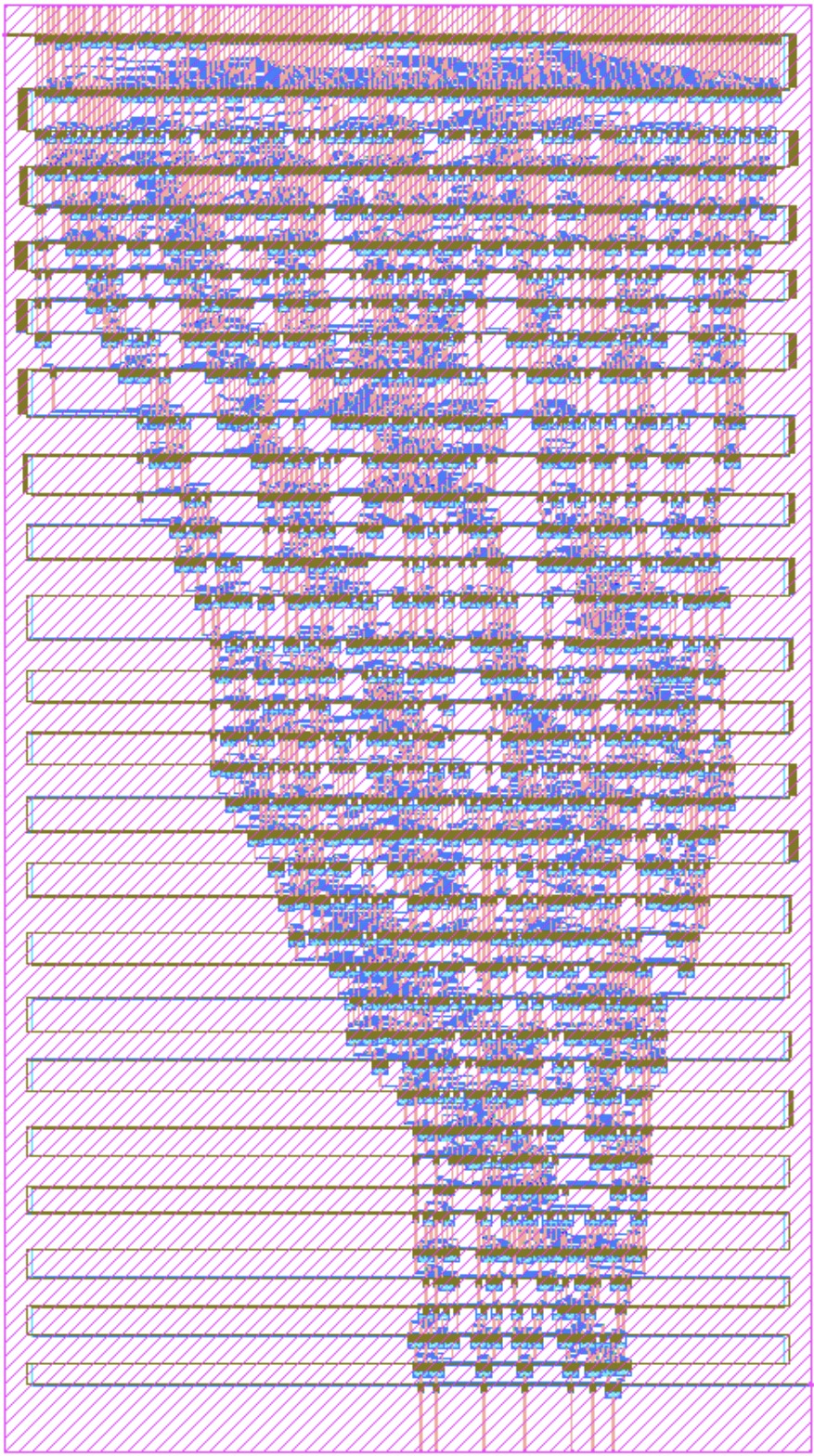}
    \end{minipage}

    \caption{
    Placed and routed design of \texttt{apc128} using RIVERPlace integrated into qPALACE.
    }
    \label{fig:apc128}
\end{figure}

\subsection{Ablations and Analysis}

To better understand the contributions and robustness of the proposed framework, we perform a series of ablation studies and sensitivity analyses.

\subsubsection{Component analysis}

RIVERPlace combines several algorithmic innovations into a unified placement framework. To evaluate the contribution of each component, Fig.~\ref{fig:ablation} presents an ablation study comparing three variants of the proposed flow, with all results normalized to GORDIAN. LP+BRI achieves an average normalized depth of 0.79, LP+Retiming+BRI reduces this to 0.72, and LP+BCI further improves it to 0.65. The complete RIVERPlace flow achieves the lowest normalized depth of 0.64.

These results show that the LP-based incremental placement formulation provides the foundation of RIVERPlace, resolving the majority of interconnect violations through placement optimization alone. Placement-aware retiming further reduces circuit depth by reusing existing buffers without introducing additional pipeline stages, while BCI delivers the largest additional improvement by globally selecting pipelining locations when depth increases become unavoidable.

\begin{figure}[htbp]
    \centering
    \begin{minipage}{1\linewidth}
        \centering
        \includegraphics[width=\linewidth]{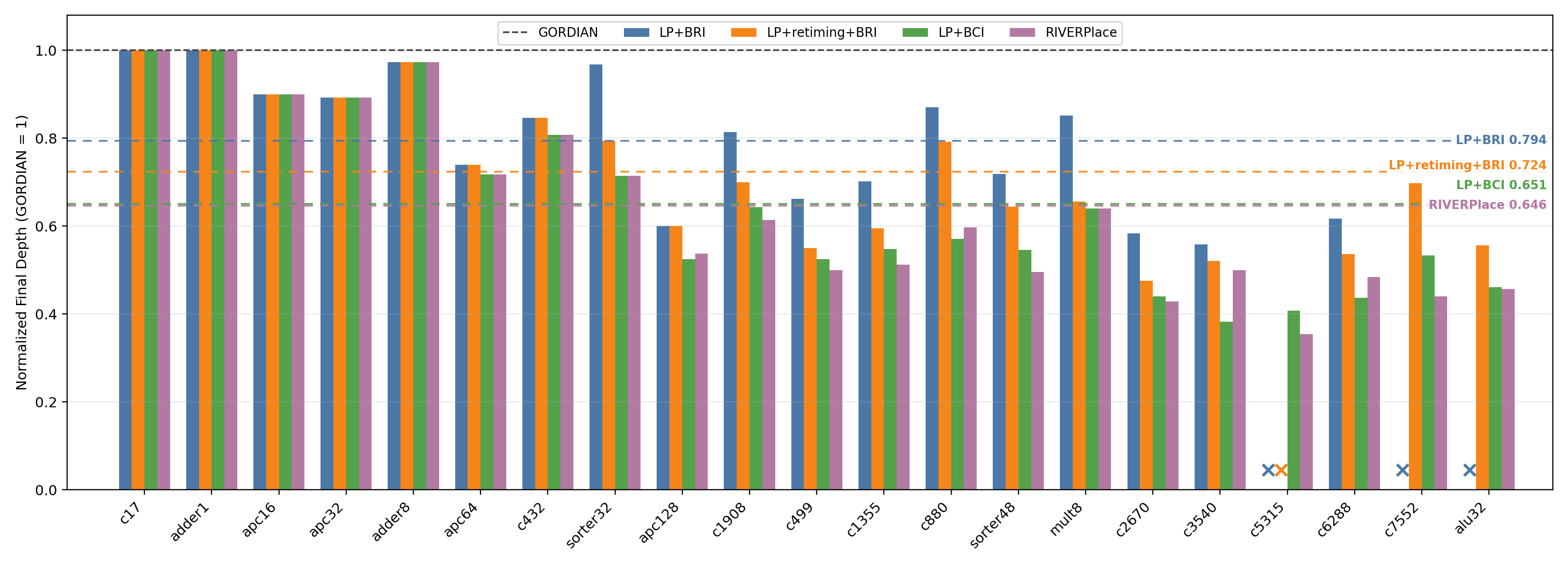}
    \end{minipage}
    \caption{
    Ablation study illustrating the impact of each RIVERPlace component (LP, retiming, and BCI) relative to GORDIAN. 
    }

    \label{fig:ablation}
\end{figure}
\subsubsection{LP weighting sensitivity}

To evaluate the robustness of the LP formulation, we perform a sensitivity study on the movement weight $\lambda_m$, fixing $\lambda_s=1$ while sweeping $\lambda_m$ from 0 to 1. The default setting is $\lambda_m=0.1$.  As shown in Fig.~\ref{fig:lambda_sensitivity}, the required BCI count is largely insensitive to $\lambda_m$ in the vicinity of the default setting for \texttt{c3540}, with similar trends observed across the benchmark suite.  As $\lambda_m$ increases, cell movement becomes increasingly constrained, reducing the LP's ability to resolve violations through placement refinement and consequently requiring more BCI. These results indicate that the LP framework is robust to moderate changes in objective weighting provided that slack reduction remains the dominant objective ($\lambda_m \ll \lambda_s$).

\begin{figure}[htbp]
    \centering
        \includegraphics[width=\linewidth]{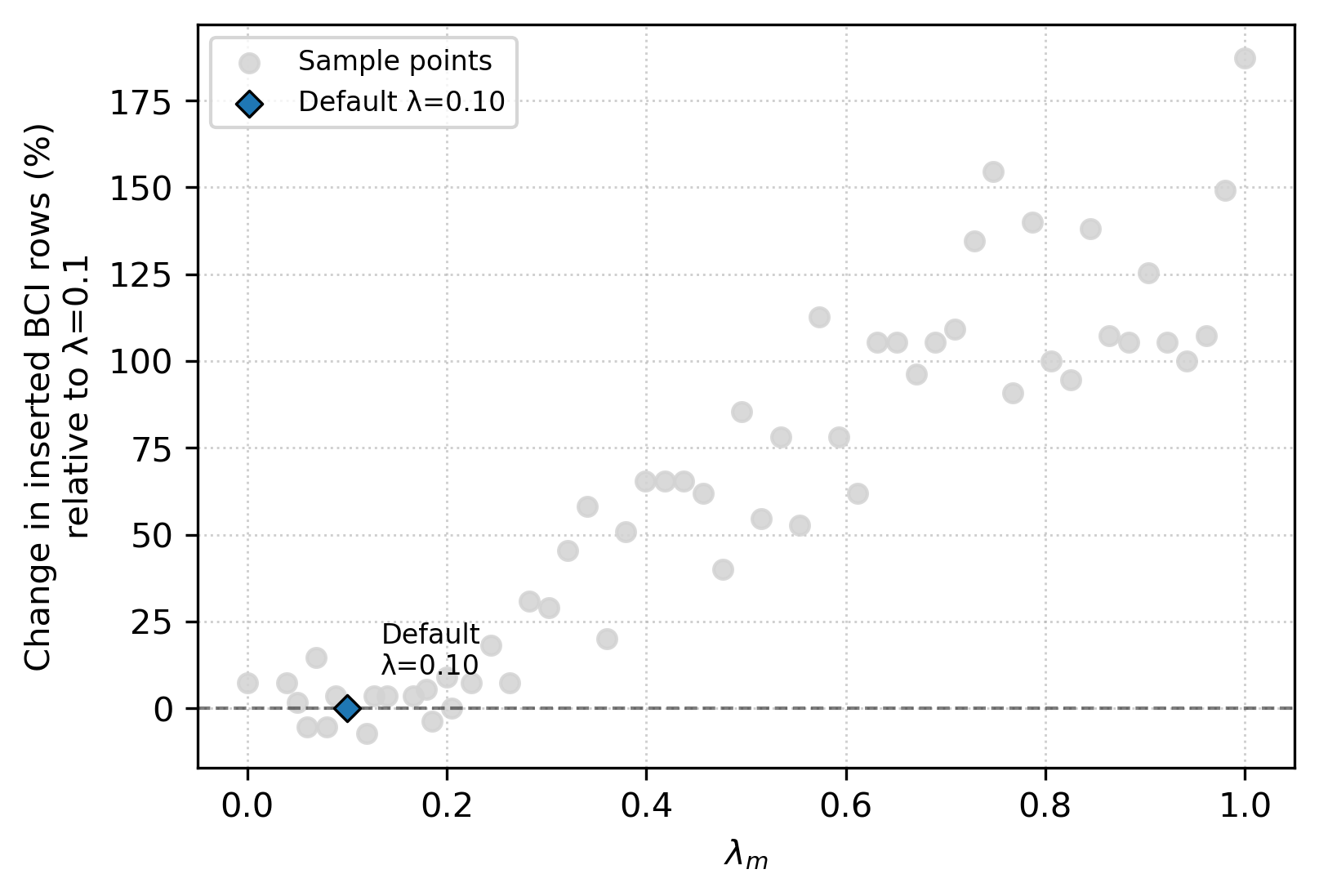}
    \caption{
    BCI overhead is insensitive to LP weights for $\lambda_m \ll \lambda_s$
    }
    \label{fig:lambda_sensitivity}
\end{figure}

\subsubsection{Retiming engine iteration analysis}

Because RIVERPlace's retiming engine modifies downstream logic levels, each 
invocation resolves violations iteratively, processing rows in topological 
order and freezing previously legalized rows. This strategy confines structural changes to downstream logic, producing a monotonic progression toward a legal placement, as illustrated for \texttt{alu32} in Fig.~\ref{fig:alu32_it}.

\begin{figure}[htbp]
    \centering
        \centering
        \includegraphics[width=\linewidth]{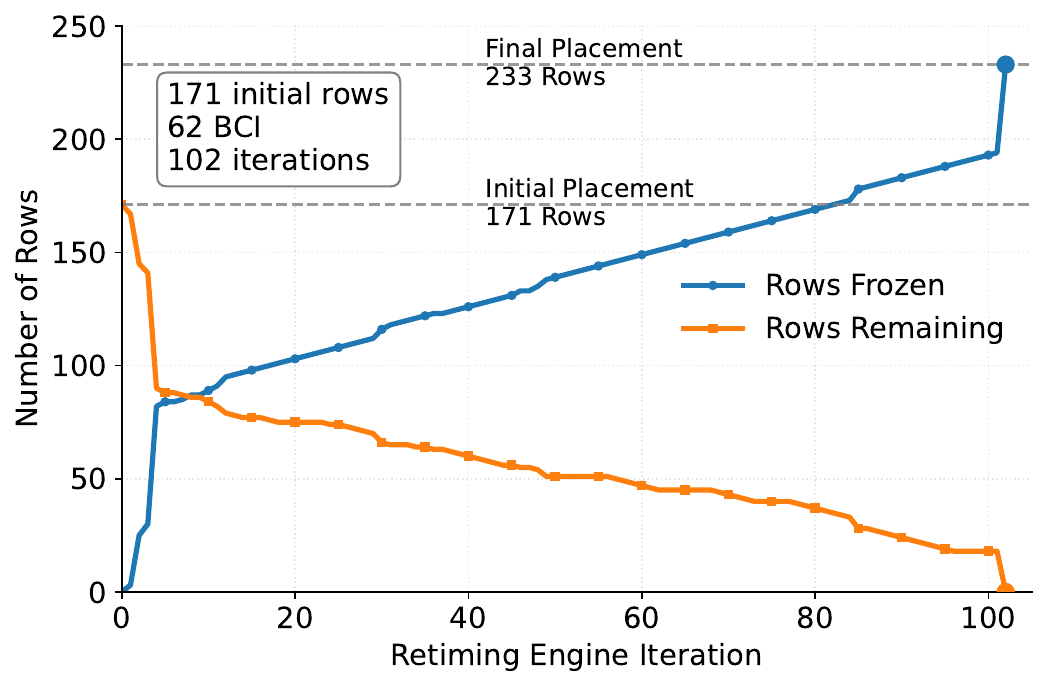}
    \caption{
    Placed and remaining rows during placement of \texttt{alu32}. Increases in total row count correspond to BCI introducing additional pipeline stages to resolve interconnect violations.
    }
    \label{fig:alu32_it}
\end{figure}

To quantify convergence, we count how many times each row invokes the retiming engine across the entire benchmark suite~\cite{LayoutAwareAQFPGitHub}. As shown in Fig.~\ref{fig:global_iteration_analysis}, 97.4\% of all rows require at most a single retiming invocation because violations were either eliminated directly by the retiming call or through upstream BCI decisions and LP-based placement refinement. Across the entire benchmark suite (1,409 rows), only 37 rows require multiple iterations, with the most demanding row converging after five passes of retiming existing buffers without increasing circuit depth. These results demonstrate that the LP formulation and retiming engine cooperate effectively to produce rapid and stable convergence.

\begin{figure}[htbp]
    \centering
        \centering
        \includegraphics[width=\linewidth]{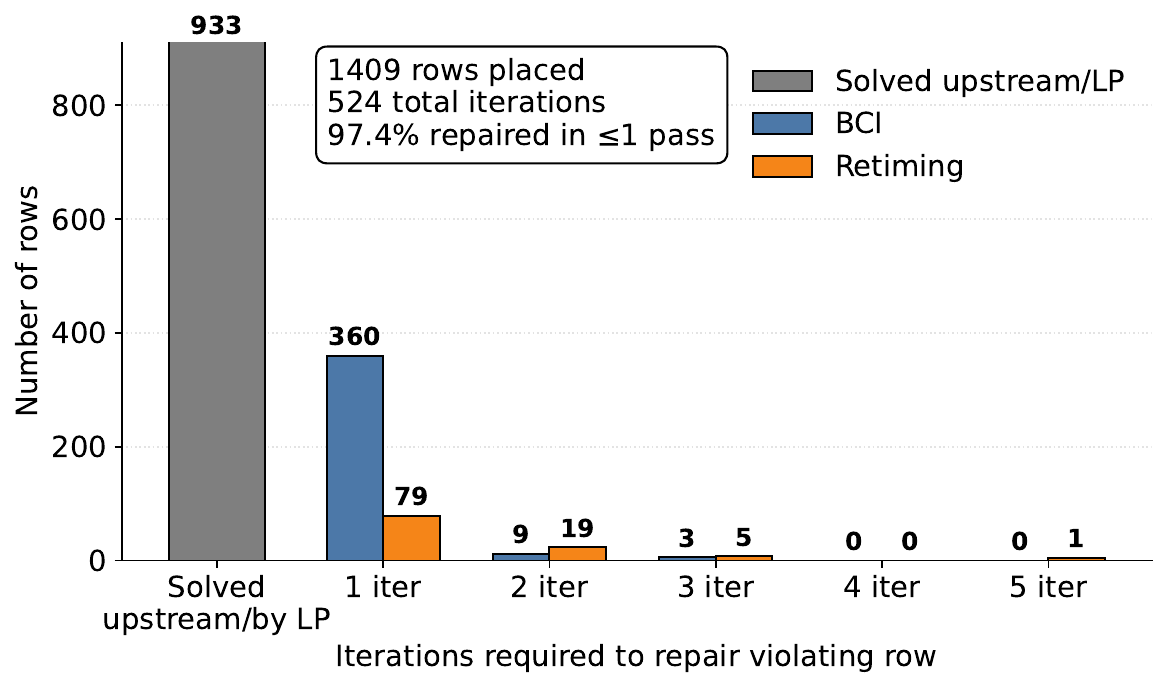}
    \caption{
    Distribution of the number of retiming iterations required per row across the benchmark suite.
    }
    \label{fig:global_iteration_analysis}
\end{figure}
Together, these results demonstrate that the quality improvements of RIVERPlace arise from the complementary interaction between incremental placement, placement-aware retiming, and globally optimized pipelining.
\section{Conclusions}

In this work, we presented RIVERPlace, an integrated retiming and incremental placement framework for resolving interconnect violations in AQFP circuits. Central to our approach is Buffer Cut Insertion (BCI), which casts violation resolution as a constrained global edge-selection problem. We show that, under AQFP path-balancing constraints, this problem reduces to a maximum topological cut, enabling an exact polynomial-time solution. Beyond BCI, RIVERPlace interleaves placement-aware retiming, local reordering, LP-based coordinate refinement, and incremental placement to resolve interconnect violations while minimizing disruption to previously optimized regions.

Experimental results demonstrate RIVERPlace's approach to efficient violation repair substantially reduces AQFP placement overhead. Compared to prior state-of-the-art AQFP placement methods, RIVERPlace achieves over an order-of-magnitude reduction in inserted JJs and runtime, a threefold reduction in placement-induced depth overhead, and a 38\% reduction in latency. Furthermore, RIVERPlace enables the first post-routing, timing-closed implementations of the complete open-source AQFP benchmark suite, including larger circuits like \texttt{alu32}.

More broadly, this work highlights the importance of incorporating pipelining and retiming directly into placement for superconducting logic. By formulating pipelining decisions as a constrained global edge-selection problem subject to topological constraints, RIVERPlace introduces a new algorithmic perspective on AQFP physical design. Future work includes integrating pipelining earlier into the synthesis flow and jointly optimizing logic restructuring, placement, and routing.

\section{Acknowledgments}
We used OpenAI ChatGPT and Google Gemini only for grammar correction and language polishing. These tools did not generate research ideas, technical content, results, figures,
or references.

This work has been supported by ARL DEVCOM under the FSDL: ColdPhase project, grant number W911NF-24-1-0317.

\bibliographystyle{IEEEtran}
\bibliography{bibliography}

\end{document}